\documentclass[aps,pra,amsmath,amssymb, reprint,superscriptaddress, floatfix]{revtex4-2}
\newcommand{\bol}[1]{\boldsymbol{#1}}
\usepackage{physics}
\usepackage{calc}
\usepackage{mathtools}
\usepackage{comment}
\newcommand{\cmmnt}[1]{}
\usepackage{tikz}

\usetikzlibrary{quantikz}
\usepackage{enumitem}
\usepackage{amsthm}
\theoremstyle{definition}
\newtheorem{theorem}{Theorem}
\newtheorem{definition}{Definition}
\newtheorem{lemma}[theorem]{Lemma}
\usepackage{balance}
\usepackage{longtable}
\usepackage[pdftex,pdfpagelabels,bookmarks,hyperindex,hyperfigures,colorlinks]{hyperref}

\begin{document}


\title{Symplectic perspective to quantum computing for Hamiltonian systems}


\author{Efstratios Koukoutsis}
\email{stkoukoutsis@mail.ntua.gr}
\affiliation{ColibriTD, 91 Rue du Faubourg Saint Honor{\'e}, 75008 Paris, France}
\affiliation{School of Electrical and Computer Engineering, National Technical University of Athens, Zographou 15780, Greece}
\author{Kyriakos Hizanidis}
\affiliation{School of Electrical and Computer Engineering, National Technical University of Athens, Zographou 15780, Greece}
\author{Lucas I I{\~n}igo Gamiz}
\author{{\'O}scar Amaro
}
\affiliation{GoLP/Instituto de Plasmas e Fus{\~a}o Nuclear, Instituto Superior T{\'e}cnico, Universidade de Lisboa,
1049-001 Lisbon, Portugal}
\author{Christos Tsironis}
\affiliation{School of Electrical and Computer Engineering, National Technical University of Athens, Zographou 15780, Greece}
\author{Abhay K. Ram}
\affiliation{Plasma Science and Fusion Center, Massachusetts Institute of Technology, Cambridge,
Massachusetts 02139, USA}
\author{George Vahala}
\affiliation{Department of Physics, William \& Mary, Williamsburg, Virginia 23187, USA}


\date{\today}

\begin{abstract}
This work develops a symplectic framework for quantum computing to be applied to classical Hamiltonian systems, exploiting the intrinsic geometric compatibility between unitary quantum evolution and symplectic phase-space dynamics in a two-fold way. The first part is devoted in establishing an exact correspondence between quantum evolution and classical Hamiltonian flow on a Kähler manifold. This correspondence enables a geometric quantization scheme that identifies a family of classical Hamiltonian systems admitting exponentially compressed quantum representations--appropriate for quantum simulation. In the second part we demonstrate that Liouville-integrable Hamiltonian dynamics induce finite-dimensional unitary evolution through action–angle variables and Koopman–von Neumann encoding. This allows efficient quantum representation and parallel evolution of large phase-space ensembles, where entangled encodings provide exponential compression in ensemble size and enable quantum speed-ups in observable estimation via amplitude estimation techniques.
For non-integrable systems,  Lie canonical perturbation theory is incorporated to construct near-symplectic transformations that map dynamics to approximately integrable forms, preserving unitary evolution up to a controlled error. We derive the resulting quantum computational complexity of the proposed quantum-symplectic scheme, revealing both an exponential compression in memory requirements and a potential polynomial speed-up with respect to the system size. Finally, the transport evolution equation governing the quantum phase-space observables is obtained.
\end{abstract}


\maketitle

\section{Introduction}\label{sec:Introduction}

Quantum computers operate under the laws of closed quantum systems, and therefore inherit a linear structure that only allows for unitary operations--gates. General linear transformations can nevertheless be implemented through block encoding techniques~\cite{Childs_2012,Schlimgen_2021,Schlimgen_2022,Jin_2023,Koukoutsis_2025}, where the non-unitary operator is embedded into a larger unitary one.

From a group theoretic point of view, the $N$-dimensional unitary operators $U(N)$ lie at the intersection of the generalized linear group $GL(\mathbb{C}, N)$, the orthogonal group $O(2N)$ and the symplectic group $Sp(\mathbb{R}, 2N)$,
\begin{equation}\label{group unitary}
U(N)=O(2N)\cap GL(\mathbb{C}, N)\cap Sp(\mathbb{R}, 2N).
\end{equation}
Whereas the complex linear and orthogonal structure of unitary quantum computing has motivated the development of quantum algorithms for simulating various linear processes in physics~\cite{Harrow_2009,Berry_2014,Koukoutsis_2023,An_2023,Babbush_2023,Novikau_2023,Bosch_2025,Amaro_2025,Wawrzyniak_2025,Bhuvanesh_2026,Gamiz_2026}, the symplectic part of the threefold structure~\eqref{group unitary} and its implications on  quantum computing for symplectic systems remain relatively unexplored. In this paper we examine the symplectic nature of unitary operations for a highly important  class of symplectic systems in physics, namely finite-dimensional Hamiltonian systems. 

Hamiltonian systems are a broad class of dynamical systems whose evolution take place in the $2N$-dimensional phase-space of conjugate generalized positions and momenta $(\bol q, \bol p)$. Their dynamics are generated by a scalar phase-space function $H(\bol q, \bol p, t)$, called Hamiltonian, that usually represents the total energy of the system. The time evolution is governed by Hamilton's equations in canonical form:
\begin{equation}\label{canonical Hamilton equations}
\dv{q_k}{t}=\pdv{H}{p_k},\quad \dv{p_k}{t}=-\pdv{H}{q_k},\quad k=0,1,\ldots,N-1.
\end{equation}
Hamiltonian systems lie at the heart of classical mechanics, describing a plethora of physical systems from simple oscillators to celestial dynamics~\cite{Arnold_1989, Lichtenberg_1992,Saletan_1998}. An important feature of Hamiltonian dynamics is that they inherently possess a symplectic structure $\omega$, as a two-form~\cite{Arnold_1989}
\begin{equation}\label{symplectic two-form}
\omega=\sum_{k=0}^{N-1} dq_k\wedge dp_k,
\end{equation}
that is preserved by the Hamiltonian flow in Eq.~\eqref{canonical Hamilton equations} and geometrically encodes the canonical relationships between positions and momenta
\begin{equation}
\{q_k,q_m\}=\{p_k,p_m\}=0,\quad \{q_k, p_m\}=\delta_{km},
\end{equation}
for $ k,m=0,1,\ldots, N-1$. The $\{\cdot,\cdot\}$ bilinear operation denotes the Poisson bracket defined as,
\begin{equation}\label{poisson braket}
\{f,g\}=\sum_k \pdv{f}{q_k}\pdv{g}{p_k}-\pdv{f}{p_k}\pdv{g}{q_k}.
\end{equation}
In that respect, a phase-space transformation $M:\bol z\to\Bar{\bol z}$ in the phase-space variables $\bol z=(\bol q, \bol p)$ is symplectic iff its Jacobian matrix $J_M=\pdv{\Bar{\bol z}}{\bol z}$ preserves the canonical symplectic structure,
\begin{equation}\label{definition of symplectic}
J_M^T\Omega J_M=\Omega\Leftrightarrow M\in Sp(\mathbb{R},2N)
\end{equation}
where $\Omega$ is the matrix representation of the $\omega$ form,
\begin{equation}
\Omega =
\begin{pmatrix}
0_{N\times N} & I_{N\times N}\\
-I_{N\times N} & 0_{N\times N}
\end{pmatrix}.
\end{equation}

In general, Hamilton's equations are nonlinear in the phase-space variables $\bol z=(\bol q, \bol p)$. As a result, extracting the nonlinear Hamiltonian flow $\bol z(t)=\Phi_t[\bol z_0]$ from Eqs.~\eqref{canonical Hamilton equations}, where $\Phi_t$ is the nonlinear evolution operator, using a linear quantum algorithm is conceptually challenging.  To alleviate this difficulty,  a common strategy is to employ infinite linear embedding of the nonlinear dynamics~\cite{Joseph_2020,Wu_2025} with an appropriate truncation to recover a finite $2N$-dimensional representation while ensuring the symplectic structure preservation in Eq.~\eqref{definition of symplectic}.

In this work, we explore an alternative pathway for leveraging quantum computing to study nonlinear Hamiltonian systems. Our approach exploits the common structural link between the symplectic component of unitary quantum operations and the intrinsic symplectic geometry of Hamiltonian dynamics. Building on this connection, we provide a symplectic quantum framework that captures long-time, high-fidelity aspects of complex nonlinear Hamiltonian systems through two complementary perspectives.

Firstly, a geometric formulation of quantum mechanics in a Kähler manifold, that compactly encompasses a positive definite Fubini–Study metric and a symplectic form, is  presented~\cite{Heslot_1985,Ashtekar_1999,Brody_2001,Volovich_2025}. Within this real and geometric framework the unitary flow generated by the quantum Schrodinger equation is recast as specific classical canonical Hamiltonian flow. Conversely, by leveraging this mapping, we identify an entire family of classical quadratic Hamiltonian systems that admit a geometric quantization, thereby transforming them into a quantum system that can potentially be solved faster on a quantum computer.

Secondly, we demonstrate that the constraint of Liouville integrability induces a unitary flow in symplectic phase-space. Building on this property, and exploiting quantum parallelism together with Liouville’s theorem, we propose a quantum framework for time evolution of microscopic phase-space ensembles for probing collective Hamiltonian dynamics as quantum observables via quantum amplitude estimation. Extending our approach to non-integrable systems, we review the Lie canonical perturbation theory as a method for symplectic integration via the construction of a hierarchy of integrable approximations~\cite{Kominis_2008}. This enables the application of the previous quantum workflow for capturing the coherent dynamics of nonlinear and non-integrable Hamiltonian systems in phase-space. The complexity of the proposed quantum unitary and symplectic scheme is derived and compared with the classical symplectic integrators, revealing potential advantages in computational resources and runtime as a function of the ensemble total degrees of freedom in the phase space.

The structure of this paper is as follows. Sections~\ref{sec:2.1},~\ref{sec:2.2} establish the Kähler geometric representation for the quantum states and operators. Then, in Sec.~\ref{sec:2.3} we show that quadratic Hamiltonians with Kähler structure admit an exact quantum representation. Sections~\ref{sec:3.1} and~\ref{sec:3.2} establish the unitary dynamics of integrable systems in the phase-space under the action-angle coordinates and detail the quantum encoding for leveraging quantum parallelism. Section.~\ref{sec:3.3}, connects quantum sampling of discrete phase-space trajectories with averaged classical observables and collective dynamics. In Sec.~\ref{sec:4.1}, the required elements of Lie canonical perturbation theory are outlined. The validity and the computational complexity of the Lie unitary approximation are examined in Sec.~\ref{sec:4.2}. Finally, a form of the evolution equation for the quantum observables is analytically derived in Sec.~\ref{sec:4.3}.

\section{Symplectic structure of Schrodinger dynamics}
The starting point of the discussion that will follow within this section is the Schrodinger equation which describes the dynamics of closed, pure-state quantum systems:
\begin{equation}\label{Schrodinger Eq}
i\pdv{\ket{\psi}}{t}=\hat{H}\ket{\psi},\quad \ket{\psi}\in\mathcal{H}(\mathbb{C},N),\quad \hat{H}=\hat{H}^\dagger\in\mathcal{D}(\mathcal{H}).
\end{equation}
In Eq.~\eqref{Schrodinger Eq}, the pure (normalized) quantum states $\ket{\psi}$ live in the $N$-dimensional Hilbert space $\mathcal{H}$, and the Hamiltonian operator $\hat{H}$ is self-adjoint into a dense  domain $\mathcal{D}(\mathcal{H})$ to generate unitary evolution. Given an orthonormal basis set $\{\ket{k}\}\in\mathcal{H}$, every state vector can be written as,
\begin{equation}\label{quantum state}
\ket{\psi}=\sum_{k=0}^{N-1}\psi_k\ket{k},\quad \psi_k\in\mathbb{C}.
\end{equation}
Finally, the Hilbert space $\mathcal{H}$ is equipped with a bilinear inner product structure $\braket{\cdot}{\cdot}:\mathcal{H}\times\mathcal{H}\to\mathbb{C}$,
\begin{equation}\label{inner prod}
\braket{\phi}{\psi}=\sum_{k,l=0}^{N-1}\phi_l^*\psi_k.
\end{equation}

\subsection{The Strocchi map: from Hilbert to Kähler space}\label{sec:2.1}
\begin{definition}[Kähler space]
Let $K$ be real vector space endowed with the following structures:
\begin{align}
&g:K\times K\to \mathbb{R}_+,\,\, \text{(positive definite bilinear form)},\label{kahler bilinear form}\\
&\omega:K\times K\to \mathbb{R},\,\, \text{(symplectic form)},\label{khaler symplectic form}\\
& \mathcal{J}:K \to K,\,\,\mathcal{J}^2=-id,\,\, \text{(complex structure)}\label{Khaler imaginary unit},
\end{align}
where $id$ is the identity element of $K$. Then, the quadruplet $(K,g,\omega,\mathcal{J})$ defines a  $2N$-dimensional real Kähler space $\mathcal{K}$ with $N=dim K$.
\end{definition}

Therefore, the Kähler space $\mathcal{K}$ defines a manifold with three compatible structures: Riemannian, symplectic and complex. Also the identities $g(\cdot,\cdot)=\omega(\cdot,\mathcal{J}\cdot)$ and $\omega(\mathcal{J}\cdot, \mathcal{J}\cdot)=\omega(\cdot,\cdot)$ which relate the different structures,
hold.

\begin{definition}[Strocchi map~\cite{Strocchi}]
The Strocchi map is an invertible realification map from the complex Hilbert space $\mathcal{H}$ to the real Kähler space $\mathcal{K}$,
\begin{equation}\label{abstract Strocci map}
\mathcal{S}:\mathcal{H}\to\mathcal{K}.
\end{equation}
\end{definition}
Under the representation  $\psi_k=\psi_k^R+i\psi_k^I$ of the quantum amplitudes in Eq.~\eqref{quantum state},
\begin{equation}\label{Khaler space elements}
\mathcal S(\psi_k)=(\psi_k^R, \psi_k^I)\equiv(q_k,p_k)=\bol{z}_k\in\mathbb{R}^{N}\times\mathbb{R}^{N},
\end{equation}
and the inner product structure in Eq.\eqref{inner prod} is expressed in terms of the Kähler structures as
\begin{equation}\label{Khaler inner product}
\braket{\phi}{\psi}=g(\bol{z}_\phi,\bol z_\psi)+i\omega(\bol{z}_\phi,\bol z_\psi).
\end{equation}
The real part $g$ of the inner product in Eq.~\eqref{Khaler inner product} is directly related (up to a factor that ensures consistency with quantum transition probabilities between states and with the geodesic distance on projective space) to the Fubini-Study (F-S) metric~\cite{Brody_2001},
\begin{equation}\label{f-s metric}
g(\bol{z}_\phi,\bol{z}_\psi)
= \langle \bol q_\phi|\bol q_\psi\rangle_{\mathbb R}
+ \langle \bol p_\phi|\bol p_\psi\rangle_{\mathbb R}
\end{equation}
where, the following notation has been employed:
\begin{equation}
\begin{aligned}
&\langle {\bol a}|{\bol b}\rangle_{\mathbb R}=\sum_{K=0}^{N-1}a_kb_k,\quad{\bol a},{\bol b}\in \mathbb R^N.
\end{aligned}
\end{equation}
The F-S metric is always positive definite. On the other hand, the imaginary part $\omega$ in Eq.~\eqref{Khaler inner product} is the symplectic form,
\begin{equation}\label{symplectic}
\omega(\bol{z}_\phi,\bol{z}_\psi)
= \langle \bol q_\phi|\bol p_\psi\rangle_{\mathbb R}
- \langle \bol p_\phi|\bol q_\psi\rangle_{\mathbb R}.
\end{equation}
In terms of the selected map $\mathcal{S}$, the induced structures $(g,\omega,\mathcal{J})$ also read:
\begin{align}
&g(\bol{z}_\phi,\bol z_\psi)=\omega(\bol{z}_\phi,\mathcal{J}\bol z_\psi),\\
&\omega(\bol{z}_\phi,\bol z_\psi)=\bol{z}_\phi^T\Omega\bol z_\psi,\\
&\mathcal{J}\bol z=(-\bol p, \bol q),
\end{align}
with block-matrix representations:
\begin{equation}\label{J,OMEGA representation}
\mathcal{J}=
\begin{pmatrix}
0_{N\times N} & -I_{N\times N}\\
I_{N\times N} & 0_{N\times N}
\end{pmatrix},\quad 
\Omega =
\begin{pmatrix}
0_{N\times N} & I_{N\times N}\\
-I_{N\times N} & 0_{N\times N}
\end{pmatrix}.
\end{equation}

Subsequently, under the Strocchi map $\mathcal{S}$, the quantum states $\ket{\psi}$ in the inner-product $N$-dimensional complex Hilbert space $\mathcal{H}$ are translated into the $\bol z$ elements belonging to the real, $2N$-dimensional Kähler space as presented in Fig.~\ref{fig:1}.
\begin{figure}[h]
    \centering
    \includegraphics[width=\linewidth]{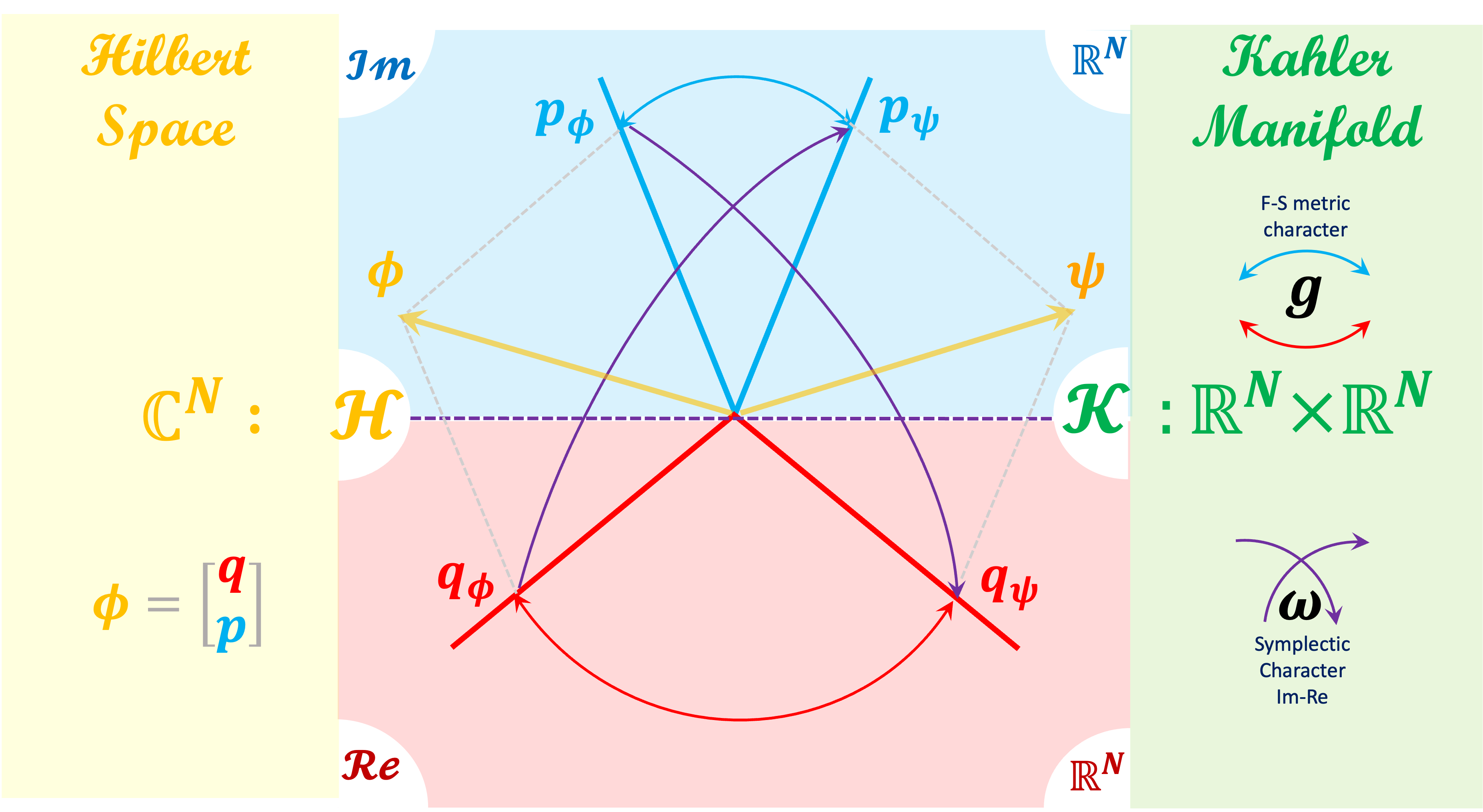}
    \caption{Schematic of the Hilbert space representation as a Kähler
manifold via the Strocchi map. The F-S metric $g=\overset{\Large\frown}{q_\phi q_\psi}+\overset{\Large\frown}{p_\phi p_\psi}$ describes the geodesic distances, hence the arcs on top of the pairs (red and blue), among the respective components in the $2N$-dimensional  real space. The order does not matter since the components involved are of the same kind. The symplectic structure $\omega=\overset{\Large\curvearrowright}{q_\phi p_\psi}-\overset{\Large\curvearrowright}{p_\phi q_\psi}=\overset{\Large\curvearrowright}{q_\phi p_\psi}+\overset{\Large\curvearrowright}{q_\psi p_\phi}$ indicates the ``cross" geodesic distances (in purple) between elements of the opposite kind, and therefore the order does matter. The state vectors $\phi$ and $\psi$ (orange) live in the Hilbert space, while $q$'s (in red) and $p$'s (in blue) in separate $N$-dimensional real spaces indicated with respective colors.}
    \label{fig:1}
\end{figure}
In Equation~\eqref{Khaler space elements},  the real elements $\bol{z}\in\mathcal{K}$ are expressed in terms of the canonical positions  $\bol q$ and momenta $\bol p$, as we will associate the quantum Schrodinger flow with a corresponding classical Hamiltonian flow in the $(\bol q,\bol p)$-phase-space. 

\subsection{Operators and the induced classical Hamiltonian}\label{sec:2.2}
Beyond describing quantum states in $\mathcal{K}$ it is important to encode quantum operators within the same framework. Following the Strocchi map in Eq.~\eqref{Khaler space elements},  any linear quantum operator $\hat{A}\in GL(\mathbb{C}, N)$ with  matrix representation $A\in Mat(\mathbb{C},N)$, can be decomposed into real and imaginary parts,
\begin{equation}\label{matrix A decompsition}
A=A_R+iA_I,\quad A_R,A_I\in Mat(\mathbb{R},N).
\end{equation}
Therefore, we define the operator mapping $\Tilde{A}:\mathcal{K}\to\mathcal{K}$ in the Kähler representation as,
\begin{equation}\label{Kahler operator matrix representation}
\Tilde{A}=\begin{bmatrix}
A_R&-A_I\\
A_I& A_R
\end{bmatrix}.
\end{equation}
Notice that The Kähler matrix $\Tilde{A}\in Mat(\mathbb{R}, 2N)$ is a real matrix compatible with the complex structure, i.e., it satisfies
\begin{equation}\label{complex structure commutation}
\Tilde{A}\mathcal{J}=\mathcal{J}\Tilde{A},
\end{equation}
where the representation of $\mathcal{J}$ is given in Eq.~\eqref{J,OMEGA representation}.

 In standard quantum mechanics self-adjoint operators $\hat{O}=\hat{O}^\dagger$ correspond to observable physical quantities, such that $\mel{\psi}{\hat{O}}{\psi}\in\mathbb{R}$. In the  Kähler representation, the corresponding operator $\Tilde{O}$  must satisfy, in addition to Eq.~\eqref{complex structure commutation},
\begin{equation}\label{Kahler hermitian}
\Tilde{O}=\Tilde{O}^T\Leftrightarrow O_R^T=O_R\,\,\text{and}\,\,O_I^T=-O_I.
\end{equation}
The second important class are the unitary operators $\hat{U}^{-1}=\hat{U}^\dagger$, that correspond to probability preserving transformations, constituting the admissible quantum computing gates. Their Kahler counterparts $\Tilde{U}$ satisfy,
\begin{equation}\label{Kahler unitary}
\Tilde{U}^{-1}=\Tilde{U}^T \,\, \text{and}\,\,\Tilde{U}^T\Omega\Tilde{U}=\Omega.
\end{equation}
Together with the commutation condition in Eq.~\eqref{complex structure commutation}, the relations in Eq.~\eqref{Kahler unitary} reiterate that unitary operators are represented in $\mathcal{K}$ as linear transformations that are simultaneously orthogonal, symplectic and compatible with the complex structure.

Out of the various observable operators, the Hamiltonian operator $\hat{H}$ has a central importance in quantum mechanics as it represents the energy observable of the system and generates unitary evolution via the Schrodinger equation~\eqref{Schrodinger Eq}.  Under the Strocchi map, the Schrodinger equation~\eqref{Schrodinger Eq} is mapped in the Kähler space $\mathcal{K}$ into the classical canonical Hamilton's equations~\cite{Strocchi}:
\begin{equation}\label{Hamilton-Schrodinger}
\dv{\bol z}{t}=\Omega\nabla_{\bol z}H_c\Leftrightarrow \dv{q_k}{t}=\pdv{H_c}{p_k},\quad \dv{p_k}{t}=-\pdv{H_c}{q_k},
\end{equation}
where $H_c$ is the induced classical Hamiltonian function (up to a prefactor of $1/2$),
\begin{widetext}
\begin{equation}\label{classical Hamiltonian}
 H_c(\bol q, \bol p)={1 \over 2}\sum_{k=0}^{N-1}\sum_{m=0}^{N-1}\Big[(q_kq_m+p_kp_m)\Re{H_{km}}+(p_kq_m-q_kp_m)\Im{H_{km}}\Big],\quad H_{km}=\mel{k}{\hat{H}}{m}.
\end{equation}
\end{widetext}
Similarly, any quantum observable $\mel{\psi}{\hat{O}}{\psi}$ is translated, through the Strocchi map, to a quadratic phase-space function $f(\bol q,\bol p)$. 

The induced classical quadratic Hamiltonian function $H_c$ in Eq.~\eqref{classical Hamiltonian} describes  $N$ coupled harmonic oscillators, subject to an effective interaction generated by the $p_kq_m-q_kp_m$ coupling terms. As a result, the unitary Schrodinger flow, generated by the Hamiltonian operator $\hat{H}$, in the complex  Hilbert space $\mathcal{H}$ of an $N$-dimensional quantum system can be equivalently mapped onto the classical Hamiltonian flow of Eq.~\eqref{Hamilton-Schrodinger} in a $2N$-dimensional phase-space with a classical quadratic Hamiltonian $H_c$ of $N$ coupled harmonic oscillators. 

\subsection{Kähler quantization and quantum computing implications}\label{sec:2.3}
Recasting the Schrodinger evolution \eqref{Schrodinger Eq} into classical canonical Hamilton equations \eqref{Hamilton-Schrodinger} provides a potential pathway to leverage this similarity in several ways. In particular, it enables  the application of symplectic integration techniques~\cite{Feng_2010} to $H_c$ for simulation of the Schrodinger equation, the  emulation of quantum gates and unitary dynamics using classical oscillators~\cite{Briggs_2012,Briggs_2013}, the formulation of a canonical perturbation theory for quantum systems~\cite{Manjarres}, and the application of classical Hamiltonian control methods on $H_c$ for quantum control purposes~\cite{Mendes_2003}.

However, there is a caveat related to dimensionality. According to the Strocchi map, an $n$-qubit quantum system with $n>>1$ possesses $N=2^n$ degrees of freedom and is mapped into an exponentially large system of $2^n$ coupled harmonic oscillators in a $2N = 2^{n+1}$-dimensional phase-space. Classically simulating the dynamics of $H_c$ for $N=2^n$ oscillators is as difficult as diagonalizing the quantum Hamiltonian operator $\hat{H}$, which scales as $\mathcal{O}(N^3) = \mathcal{O}(2^{3n})$ for $n>>1$. This can be seen by directly diagonalizing the $\hat{H}$ through the unitary operator $\hat{V}$,
\begin{equation}
\hat{H}=\hat{V}^\dagger\hat{\Delta}\hat{V},\,\,\hat{\Delta}=\lambda_{k}\delta_{km},\quad\lambda_k\in\mathbb{R},
\end{equation}
which in principle uncouples the oscillators and yields the normal form of the classical Hamiltonian $H_c$,
\begin{equation}\label{normal form}
\bar{H}_c=\frac{1}{2}\sum_{k=0}^{N-1}\lambda_k(q^2_k+p^2_k).
\end{equation}
Integration of Hamilton's equation for the $\bar{H}_c$ is now trivial since it represents separate classical harmonic oscillators with normal frequencies $\omega_k=\lambda_k$. In addition, from the normal form in Eq.\eqref{normal form} it is demonstrated that all quantum systems are integrable in a Liouville sense~\cite{Volovich_2019} since $\{I_k,I_m\}=0$ where $I_k=(q_k^2+p_k^2)/2$ for $k,m=0,1,\ldots,N-1$.

On the other hand, the established correspondence between the Schrodinger $\hat{H}$ Hamiltonian operator and induced classical Hamiltonian $H_c$ can be interpreted in reverse. Specifically, the following result identifies 
which classical stable Hamiltonian systems, described by a quadratic Hamiltonian $H$ in a $2N$-dimensional phase-space, can be mapped into the form of $H_c$ and subsequently rewritten as a $n=\log_2N$-qubit quantum Hamiltonian ${H}_q$.
\begin{lemma}[Kähler quantization]
Let $H=\bol z^T\Tilde{H}\bol z,$ be a stable classical quadratic Hamiltonian on a $2N$-dimensional phase-space, with $\Tilde{H}\in Mat(\mathbb{R}, 2N)$ a block-matrix of the form,
\begin{equation}\label{tilde H}
\Tilde{H}=\begin{pmatrix}
Q_1 & P^T \\
P &Q_2
\end{pmatrix},\,\,Q_{1,2},P\in Mat(\mathbb{R},N),\,\, Q_{1,2} = Q_{1,2}^T.
\end{equation}
If the classical Hamiltonian  matrix $\Tilde{H}$ possesses complex structure, i.e satisfying $\Tilde{H}\mathcal{J}=\mathcal{J}\Tilde{H}$, then the classical Hamiltonian flow generated by $H$ can be equivalently represented as a quantum Schrodinger evolution with the corresponding $N$-dimensional quantum Hamiltonian $H_q$ matrix being
\begin{equation}\label{quantal hamiltonian operator}
H_q=Q_1+iP\in Mat(\mathbb C,N).
\end{equation}
\end{lemma}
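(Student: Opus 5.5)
The plan is to show that the commutation hypothesis forces $\Tilde{H}$ into exactly the Kähler operator form of Eq.~\eqref{Kahler operator matrix representation}. Then I would read off the complex matrix it represents, check that this matrix is Hermitian, and finally push Hamilton's equations for $H$ back through the inverse Strocchi map $\mathcal{S}^{-1}$ to obtain a Schrödinger equation.

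\emph{Step 1: block structure.} Multiplying out $\Tilde{H}\mathcal{J}$ and $\mathcal{J}\Tilde{H}$ with $\mathcal{J}$ from Eq.~\eqref{J,OMEGA representation}, I expect
\begin{equation*}
\Tilde{H}\mathcal{J}=\begin{pmatrix} P^T & -Q_1\\ Q_2 & -P\end{pmatrix},\qquad
\mathcal{J}\Tilde{H}=\begin{pmatrix} -P & -Q_2\\ Q_1 & P^T\end{pmatrix}.
\end{equation*}
Equating these gives $Q_1=Q_2$ and $P^T=-P$, so
\begin{equation*}
\Tilde{H}=\begin{pmatrix} Q_1 & -P\\ P & Q_1\end{pmatrix}.
\end{equation*}
This is precisely the form of Eq.~\eqref{Kahler operator matrix representation} with $A_R=Q_1$ and $A_I=P$. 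Hence $\Tilde{H}$ is the Kähler image of $H_q=Q_1+iP$.

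\emph{Step 2: Hermiticity.} Since $Q_1=Q_1^T$ and $P^T=-P$, the criterion of Eq.~\eqref{Kahler hermitian} holds. Therefore $H_q^\dagger=Q_1^T-iP^T=Q_1+iP=H_q$, so $H_q$ is a legitimate (Hermitian) quantum Hamiltonian with real spectrum. Stability of $H$ is compatible with, but not needed for, this algebraic step. It only guarantees bounded oscillatory flow, consistent with a spectrum of the form in Eq.~\eqref{normal form}.

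\emph{Step 3: dynamics.} Hamilton's equations give
\begin{equation*}
\dot{\bol z}=\Omega\nabla_{\bol z}H=2\Omega\Tilde{H}\bol z=-2\mathcal{J}\Tilde{H}\bol z,
\end{equation*}
using the symmetry of $\Tilde{H}$ and $\Omega=-\mathcal{J}$. Under $\mathcal{S}^{-1}$, multiplication by $\mathcal{J}$ corresponds to multiplication by $i$, since $\mathcal{J}(\bol q,\bol p)=(-\bol p,\bol q)$ is the image of $i(\bol q+i\bol p)$. Likewise, the Kähler matrix $\Tilde{H}$ corresponds to $H_q$ acting on $\psi=\bol q+i\bol p$. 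The equation therefore becomes $i\dot\psi=2H_q\psi$.

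Equivalently, comparing with Eq.~\eqref{classical Hamiltonian}, one finds $H=2H_c[H_q]$. This means $H$ is the induced classical Hamiltonian of $2H_q$, matching Eq.~\eqref{Hamilton-Schrodinger}. The factor $2$ can be absorbed by the time rescaling $t\to 2t$, or equivalently by the ``up to a prefactor $1/2$'' convention already adopted for $H_c$. This yields Schrödinger evolution generated by $H_q$. Because $\mathcal{S}$ is invertible, the correspondence between the classical and quantum flows is exact.

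The main obstacle I expect is bookkeeping of conventions rather than substance. The sign relating $\Omega$ and $\mathcal{J}$, the identification $A_I=P$ (rather than $-P$), and the factor $2$ from differentiating $\bol z^T\Tilde{H}\bol z$ must all agree with Eqs.~\eqref{classical Hamiltonian} and \eqref{Kahler operator matrix representation}. I would settle these by checking the $N=1$ case, $H=\lambda(q^2+p^2)$, explicitly.
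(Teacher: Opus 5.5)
Your proposal is correct and follows the paper's proof. The commutation condition $\Tilde{H}\mathcal{J}=\mathcal{J}\Tilde{H}$ forces $Q_1=Q_2$ and $P^T=-P$, so $\Tilde{H}$ is the K\"ahler image of $Q_1+iP$, and the inverse Strocchi map then yields $H_q$. Your Hermiticity check and explicit pull-back of Hamilton's equations go beyond the paper's one-line appeal to $\mathcal{S}^{-1}$. You are also right that $H=\bol z^T\Tilde{H}\bol z$ gives $H=2H_c$ and hence $i\dot\psi=2H_q\psi$; the paper silently absorbs this factor into its ``up to a prefactor'' convention.
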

\begin{proof}
Explicitly calculating the complex symmetry condition $\Tilde{H}\mathcal{J}=\mathcal{J}\Tilde{H}$ using $\Tilde{H}$ in Eq.~\eqref{tilde H} and the matrix representation of $\mathcal{J}$ in Eq.~\eqref{J,OMEGA representation} we obtain $Q_1=Q_2=Q$ and $P^T=-P$. Therefore, according to Eq.~\eqref{Kahler operator matrix representation}, $\Tilde{H}$ has a Kähler structure,
\begin{equation}\label{Kahler classical structure}
\Tilde{H}=\begin{pmatrix}
Q & -P \\
P &Q
\end{pmatrix}.
\end{equation}
Applying the inverse Strocchi map $\mathcal{S}^{-1}:\mathcal{K}\to\mathcal{H}$, the classical Kähler Hamiltonian $\tilde{H}$ naturally induces a complex quantum Hamiltonian matrix  $H_q$,
\begin{equation}\label{quantal hamiltonian operator2}
H_q=Q+iP\in Mat(\mathbb C,N).
\end{equation}
\end{proof}
This establishes a one-to-one correspondence between the classical quadratic Hamiltonian system $H$ with a Kähler structure and $N$-dimensional quantum Hamiltonian matrix $H_q$.

Thus, the problem of simulating the $2N$-dimensional classical Hamiltonian flow of $H$  using standard symplectic techniques, is reduced to simulating the quantum Schrodinger equation
\begin{equation}\label{quantal Schrodinger}
i\pdv{\bol\psi}{t}=H_q\bol\psi, \quad \bol\psi=\bol q+i\bol p,
\end{equation}
corresponding to a $n=\log_2 N$ qubit system. By performing this quantum representation, one can leverage potentially exponentially faster quantum simulation techniques for Eq.~\eqref{quantal Schrodinger} which scale as $\mathcal{O}(\log_2N, T, \log(1/\epsilon))$ and allow efficient extraction of relevant observables~\cite{Babbush_2023,Dsouza_2006}. In contrast, direct classical symplectic methods scale as $\mathcal{O}(N, T, 1/\epsilon^\kappa)$, where $\kappa$ is the order of the symplecic integrator~\cite{Feng_2010}. Importantly, this geometric quantization is exact: any unitary quantum algorithm automatically preserves the underlying symplectic structure of Hamilton’s equations for $H$, ensuring full fidelity to the classical dynamics while achieving exponential compression.

\section{Liouville integrability and quantum computing}\label{sec:3}
In the previous section we established the equivalence between Schrodinger dynamics and the Hamiltonian flow generated by a classical integrable  and finite-dimensional quadratic Hamiltonian, and vice versa. In this section, we relax the quadratic constraint to investigate more general and nonlinear classical and finite-dimensional Hamiltonian systems, by examining how the Liouville integrability relates to quantum unitary evolution and exploring the resulting consequences for quantum computation.

\subsection{Liouville integrability generates unitary flow}\label{sec:3.1}
\begin{definition}[Action-angle variables~\cite{Arnold_1989}]
Consider a Liouville integrable Hamiltonian system $H(q_k,p_k)$ with $k=0,1,\ldots,N-1$ with compact and dense energy levels. Then, there exists a canonical transformation $T:(q_k,p_k)\to(\theta_k,I_k)$ to a set of action-angle variables $(I_k, \theta_k)$, generated by a function $W(q_k, I_k)$, such that the transformed Hamiltonian  $K$ depends only on the actions,
\begin{equation}\label{general integrability transformation}
H(q_k,p_k)\xrightarrow{T}K(I_k).
\end{equation}
The actions $I_k$ are provided by
\begin{equation}\label{action integral}
I_k=\frac{1}{2\pi}\oint_{\gamma_k} p_k d\,q_k,
\end{equation}
and the canonical transformation is defined through the generating function $W$ as,
\begin{equation}\label{transformation coordinates}
p_k=\pdv{W}{q_k},\quad \theta_k=\pdv{W}{I_k}.
\end{equation}
\end{definition}

In the action-angle representation, the flow generated by the $K$ Hamiltonian reads,
\begin{align}
&I_k(t)=I_k(t_0)>0, \label{action-angle evolution}\\
&\theta_k(t)=\theta_k(t_0)+\omega_k(t-t_0),\quad \omega_k=\pdv{K}{I_k} \label{action-angle evolution2}.
\end{align}
By encoding the action-angle variables $(\boldsymbol{\theta}, \boldsymbol{I})$ into a quantum state $\ket{\psi}$ using the Koopman–von Neumann (KvN) formalism~\cite{Joseph_2020,Manjarres},
\begin{equation}\label{KvN state}
\ket{\psi(t)}=\sum_{k=0}^{N-1}\sqrt{I_k(t)}e^{-i\theta_k(t)}\ket{k},
\end{equation}
the classical evolution equations~\eqref{action-angle evolution},~\eqref{action-angle evolution2} are compactly expressed as a unitary flow generated by a unitary diagonal operator
\begin{equation}\label{unitary diagonal integrable}
\hat{U}(t;t_0)=\sum_ke^{-i\omega_k(t-t_0)}\ket{k}\bra{k},\quad \ket{\psi(t)}=\hat{U}(t;t_0)\ket{\psi(t_0)}.
\end{equation}
For the quantum state in Eq.~\eqref{KvN state} to be a normalized, the actions must satisfy $\bol I\in S^{N-1}$.

Consequently, any classical integrable Hamiltonian system with an explicit action-angle representation, admits a unitary quantum evolution, where the eigenvalues of the corresponding quantum Hamiltonian operator $\hat{H}$ are precisely the canonical frequencies $\omega_k$.

As before, the quantum encoding of the action-angle variables of the integrable system in Eq.~\eqref{KvN state} achieves an exponential compression requiring $n=\log_2N$ qubits to encode $2N$ degrees of freedom. Implementing the diagonal unitary evolution in Eq.~\eqref{unitary diagonal integrable} exactly on a quantum computer requires in general $\mathcal{O}(2^n)=\mathcal{O}(N)$ elementary unitary gates~\cite{Bullock_2004}.

Two key elements emerge in the present discussion and must be highlighted. Firstly, the starting point is a general Liouville integrable system which can be nonlinear, such as the pendulum whose (dimensionless) Hamiltonian is 
\begin{equation}\label{Pendulum Hamiltonian}
H=\frac{p^2}{2}+1-\cos{\phi},
\end{equation}
and admits a local action-angle representation for both rotation and libration regimes~\cite{Brizzard_2013}. Despite the nonlinear nature of the original Hamiltonian and its associated flow, the canonical transformation to action-angle variables $T$ mediates the dynamics to a linear unitary evolution Eq.~\eqref{unitary diagonal integrable}. Secondly, integrability imposes a strong structural constraint: the KvN quantum state representation in Eq.~\eqref{KvN state} remains finite--$N$--dimensional even though the underlying system is nonlinear. This is because the dynamics are confined to  an invariant tori characterized by a discrete set of $N$ normal frequencies ${\omega_k}$, rather than requiring an arbitrary infinite-dimensional spectral decomposition.

These two factors establish that finite-dimensional nonlinear yet integrable and stable classical dynamics can be exactly recast into finite-dimensional unitary quantum evolution. Given the Liouville integrability of all quantum systems~\cite{Volovich_2019} (see also Sec.~\ref{sec:2.3}) this interconnection is somewhat expected.

\subsection{Probing phase-space with quantum parallelism}\label{sec:3.2}
Consider a discrete finite-dimensional ensemble of $N_s$ phase-space trajectories $\bol \rho(t)=\{I_k^j(t), \theta_k^j(t)\}$ within a phase-space volume $\mathcal{V}$ with $k=0,1,\ldots,N-1$ and $j=0,1,\ldots,N_s-1$. Then, $\bol\rho$ is associated with the Klimontovich microscopic phase-space density~\cite{Klimontovich_1957},
\begin{equation}\label{Klimontovich density}
\rho_K=\frac{1}{N_s}\sum_{j=0}^{N_s-1}\delta(\bol I-\bol I^j(t))\delta(\bol\theta-\bol\theta^j(t))
\end{equation}
where $\delta$ denotes the Dirac delta distribution and 
\begin{equation}\label{normalization}
\int_\mathcal{V}\rho_K(\bol I^j,\bol\theta^j)d\,\bol I d\,\bol\theta=1.
\end{equation}
The Klimontovich density provides an exact microscopic description of dynamics  and is a starting point for all further kinetic theories with applications in plasma physics~\cite{Klimontovich_1972,Dougherty_1969}.

The discrete Klimontovich ensemble $\bol \rho $ can be encoded as a quantum state in two distinct ways. A natural construction is to create a $N_s$-tensor product of quantum states $\ket{\psi^j}$ according to the KvN encoding in Eq.~\eqref{KvN state},
\begin{equation}\label{vector density separable encoding}
 \ket{\rho(t)}=\bigotimes_{j=0}^{N_s-1}\ket{\psi^j(t)},\quad \ket{\psi^j(t)}=\sum_{k=0}^{N-1}\sqrt{I_k^j(t)}e^{-i\theta_k^j(t)}\ket{k}.
\end{equation}
By Liouville's theorem~\cite{Arnold_1989}, the trajectories of the Klimontovich ensemble $\bol \rho$ evolve in parallel without intersecting in the phase-space. In addition, due to the integrability constraint, each $j$-trajectory evolves unitarily as shown in Sec.~\ref{sec:3.1}. Therefore, the ensemble evolution $t\to t+\Delta t$ based on the separable KvN encoding  has a structure akin to classical parallel computation,
\begin{equation}\label{separable evolution}
\ket{\rho(t)}\xrightarrow{\hat{U}^{\otimes N_s}(\Delta t)}\bigotimes_{j=0}^{N_s-1}\ket{\psi^j(t+\Delta t)}=\ket{\rho(t+\Delta t)}.
\end{equation}
This formulation interprets the evolution of the $N_s$ trajectories as a parallel quantum process over separable states, as illustrated in Fig.~\ref{fig:2}.
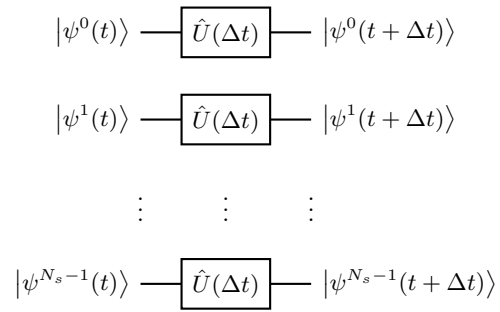
\begin{figure}[ht]
    \centering
    \begin{quantikz}
    \lstick{$\ket{\psi^0(t)}$} &\gate{\hat{U}(\Delta t)} & \qw \rstick{$\ket{\psi^0(t+\Delta t)}$} \\
    \lstick{$\ket{\psi^1(t)}$} &\gate{\hat{U}(\Delta t)} & \qw \rstick{$\ket{\psi^1(t+\Delta t)}$}\\
    \vdots&\vdots &\vdots \\
     \lstick{$\ket{\psi^{N_s-1}(t)}$} &\gate{\hat{U}(\Delta t)} & \qw \rstick{$\ket{\psi^{N_s-1}(t+\Delta t)}$}
    \end{quantikz}
    \caption{Classical-like parallel evolution of the separable quantum encoded microscopic ensemble $\ket{\rho}$ in Eq.~\eqref{separable evolution}.}
    \label{fig:2}
\end{figure}
Analyzing the depth and width of the quantum circuit in Fig.~\ref{fig:2},
\begin{itemize}
    \item Cost of implementing diagonal $\hat{U}(\Delta t)$: $\mathcal{O}(N)$,
    \item Circuit depth: $\mathcal{O}(N)$,
    \item Circuit width: $n=N_s\log_2N$ qubits.
\end{itemize}

Although the parallel implementation depicted in Fig.~\ref{fig:2} exhibits a quantum structure it manifests classical parallelism due to the separability of the input state, leading to the linear scaling of the quantum resources in terms of the ensemble size $N_s$. To leverage the quantum parallelism it is advantageous to encode the density vector $\bol \rho$ as an entangled state:
\begin{equation}\label{vector density entangled encoding}
\begin{split}
\ket{\rho(t)}&=\frac{1}{\sqrt{N_s}}\sum_{j=0}^{N_s-1}\ket{j}\ket{\psi^j(t)}\\
&=\frac{1}{\sqrt{N_s}}\sum_{j=0}^{N_s-1}\sum_{k=0}^{N-1}\sqrt{I_k^j(t)}e^{-i\theta_k^j(t)}\ket{j}\ket{k},
\end{split}
\end{equation}
which requires a total of $n=\log_2N+\log_2N_s=\log_2(N N_s)$  qubits. The unitary evolution operator $\hat{\mathcal{U}}$ now reads,
\begin{equation}\label{entangled evolution}
\hat{\mathcal{U}}(\Delta t)=\sum_j\ket{j}\bra{j}\otimes\hat{U}(\Delta t)=I_{N_s\times N_s}\otimes\hat{U}(\Delta t),
\end{equation}
and admits a straightforward implementation as shown in Fig.~\ref{fig:3}
\begin{figure}[ht]
    \centering
    \begin{quantikz}
    \lstick[4]{$\ket{\rho(t)}$} &\gate{\hat{U}(\Delta t)} &\qw \rstick[4]{$\ket{\rho(t+\Delta t)}$}\\
    &\gate{\hat{U}(\Delta t)} &\qw \\
    \vdots&\vdots &\vdots \\
     &\gate{\hat{U}(\Delta t)} &\qw 
    \end{quantikz}
    \caption{Quantum-parallel evolution of the  entangled quantum encoded microscopic density $\ket{\rho}$ through Eq.~\eqref{entangled evolution}.}
    \label{fig:3}
\end{figure}
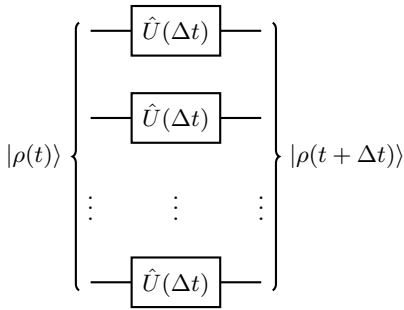

In contrast to the separable parallel implementation in Fig.~\ref{fig:2} the entangled encoding  of Eq.~\eqref{vector density entangled encoding} fully harness the quantum parallelism. By constructing a maximally entangled state the parallel evolution structure and the linear $\mathcal{O}(N)$ circuit depth are retained while an exponential reduction in the circuit width with $n=\log_2(N N_s)$ qubits is achieved. Additionally, in both parallel quantum implementations the circuit depth is independent of $N_s$. This is in sheer contrast with the classical picture where the Klimontovich  density $\rho_K$ is the starting point for kinetic-averaged descriptions but it is not useful for calculations by itself because it cannot be encoded  and tracked into a classical computer when $N_s>>1$. However, the present quantum encoding, renders the microscopic description tractable within the quantum computing framework, naturally aligning with the many-body structure of quantum systems and enabling efficient representation and evolution of large ensembles.

In various applications involving  mean-field or macroscopic descriptions, the number of canonical variables are $N\sim\mathcal{O}(1)$. For instance, in the
guiding center approximation for an axisymmetric tokamak equilibrium, there are three actions: the magnetic moment, the canonical angular momentum, and the toroidal flux enclosed by a drift surface. The respective conjugate angles are the gyrophase, azimuthal angle, and poloidal
angle~\cite{Kaufman_1972}. 

At this point it will be tempting to claim that quantum computing enables access to the evolution of large phase-space ensemble $\bol \rho$ of trajectories using quantum parallelism and superposition. However, although the evolved density vector $\ket{\rho}$ carries information about $NN_s$ variables simultaneously, this information  cannot be accessed in the same parallel way. Any measurement of an individual component $\ket{\psi^j}$ destroys the superposition,  requiring to re-execute the evolution process and therefore leading to an information extraction bottleneck associated with the quantum tomography of the vector state $\ket{\rho}$. Instead, we  have to define meaningful quantum observables that contain important information about the collective behavior of the trajectories.

\subsection{Phase-space averages as quantum observables}\label{sec:3.3}
In defining quantum observables $\hat{f}$, only those possessing a block-diagonal form in the ensemble index $j$ are physically admissible,
\begin{equation}\label{quantum observable}
\hat{f}=\sum_j\ket{j}\bra{j}\otimes\hat{f}^j,\quad \hat{f}^{j^\dagger}=\hat{f}^j,
\end{equation}
where
\begin{equation}\label{part observable}
\hat{f}^j=\sum_{k,m}^{N-1}f_{km}^j\ket{k}\bra{m},\quad f_{km}^j=f^{j*}_{mk}.
\end{equation}
This restriction is necessary because allowing non-diagonal contributions in the $\ket{j}$ basis introduces correlations and mixing between distinct  $j$-trajectories, contradicting the Liouville theorem which guarantees that trajectories evolve independently without intersection in phase-space.
Then,
\begin{widetext}
\begin{equation}\label{quantum observables continuous limit}
\begin{split}
\langle\hat{f}\rangle=\mel{\rho}{\hat{f}}{\rho}=\frac{1}{N_s}\sum_j\mel{\psi^j}{\hat{f}^j}{\psi^j}=\frac{1}{N_s}\sum_jf(\bol I^j,\bol\theta^j)=\int_\mathcal{V}f(\bol I,\bol\theta)\rho_K(\bol I, \bol \theta) d\,\bol I d\,\bol\theta \xrightarrow{N_s>>1}&\int_{\mathcal{V}}f(\bol I,\bol\theta)\langle\rho_K(\bol I, \bol \theta)\rangle d\,\bol I d\,\bol\theta\\
\overset{\langle\rho_K\rangle=\rho}{=}&\langle f(\bol I, \bol\theta)\rangle,
\end{split}
\end{equation}
\end{widetext}
where we have assumed that the Liouville smooth density $\rho(\bol I, \bol \theta,)$ is the average of the Klimontovich microscopic density $\langle\rho_K(\bol I, \bol \theta)\rangle$ for $N_s\to\infty$. Therefore, quantum observables represent averages of phase-space quantities $f(\bol\theta,\bol I)$ over the phase-space volume $\mathcal{V}$.  

In the following, we present three representative examples of observables $\hat{f}$, highlighting their structure and physical interpretation. First, consider a subset $\mathcal{V}' \subset \mathcal{V}$ containing $N_s' < N_s$ trajectories. Then,
\begin{enumerate}
    \item  Partition function for the $k$-action, $\mathcal{I}_{k,V'}$:
    \begin{equation}
    \hat{f}=\sum_{j}^{N'_s}\hat{P}_j\otimes\hat{P}_k 
    \end{equation}
     where $\hat{P}$ denotes a projection operator in the respective basis,
    \begin{equation}\label{action partition function}
    \mathcal{I}_{k,V'}=\mel{\rho}{\hat{f}}{\rho}=\frac{1}{N_s}\sum_{j}^{N'_s}I_k^j.
     \end{equation}
     \item Energy partition function, $\mathcal{E}_{V'}$:
     \begin{equation}
    \hat{f}=\sum_{j}^{N'_s}\hat{P}_j\otimes\sum_k^N\omega_k\hat{P}_k,
     \end{equation}
     \begin{equation}\label{energy partition function}
     \mathcal{E}_{V'}=\mel{\rho}{\hat{f}}{\rho}=\frac{1}{N_s}\sum_{j}^{N'_s}\sum_k^N\omega_kI^j_k.
     \end{equation}
     When $N'_s=N_s$, Eq.~\eqref{energy partition function} is the total mean energy.
     \item Coherence partition function, $\mathcal{C}_{V'}$:
     \begin{equation}
    \hat{f}=\sum_{j}^{N'_s}\hat{P}_j\otimes\sum_{k\neq m}^{N}\ket{k}\bra{m},
     \end{equation}
     \begin{equation}
    \mathcal{C}_{V'}=\mel{\rho}{\hat{f}}{\rho}=\frac{1}{N_s}\sum_{j}^{N'_s}\sum_{k\neq m}^N\sqrt{I_k^jI_m^j}\cos{(\theta_k-\theta_m)}.
     \end{equation}
     The last expression provides information about the cross correlations between the various modes.
\end{enumerate}

The key advantage of encoding of phase-space averages as quantum observables lies in the fundamental difference between quantum and classical sampling with respect to the ensemble size $N_s$. In classical computation, estimation of the observable $\langle f(\bol I,\bol\theta)\rangle$ to an error $\varepsilon$ requires $\mathcal{O}(N_s)$ samples. In contrast, the quantum counterpart, based on the quantum amplitude estimation~\cite{Brassard_2002,Grikno_2021} routine, achieves the same  accuracy with only $\mathcal{O}(\sqrt{N_s})$, yielding a quadratic speed-up. In that direction, observable extraction can be framed within a quantum Monte Carlo paradigm, as proposed and demonstrated in~\cite{Gamiz_2026}.

All together, the present framework introduces two quantum improvements into the classical action-angle description of integrable Hamiltonian systems. Firstly, the entangled, finite dimensional KvN encoding allows description and parallel evolution of $N_s$ trajectories in the phase-space with only $\log_2N_s$ qubits, achieving an exponential quantum compression in the ensemble size. Secondly, extracting collective phase-space quantities as quantum observables benefits from quantum sampling speed-ups, reducing the measurement cost from $\mathcal{O}(N_s)$ to $\mathcal{O}(\sqrt{N_s})$ for fixed accuracy $\varepsilon$. Once again, the unitary nature of the quantum routines ensures the preservation of the symplectic structure for finite-dimensional Hamiltonian integrable systems.

\section{Beyond integrability: Leveraging the canonical perturbation techniques}\label{sec:4}

Let us now drift away from integrable systems and examine classical non-integrable and nonlinear Hamiltonian systems of the form,
\begin{equation}\label{non-integrable Hamiltonian}
H(\bol \theta,\bol I, t)=H_0(\bol I)+\epsilon H_1(\bol\theta. \bol I, t).
\end{equation}
The Hamiltonian~\eqref{non-integrable Hamiltonian} consists of an integrable part $H_0$  and a non-integrable contribution $H_1$ whose strength is modulated by the dimensionless ordering parameter $\epsilon$. When $\epsilon<<1$ the non-integrable contribution acts as a perturbation to the integrable part and the Hamiltonian system is described as near-integrable.

Hamiltonian systems of the form~\eqref{non-integrable Hamiltonian} play a central role in  classical dynamics with a wide range of practical applications such as wave-particle interactions in magnetized plasmas, galactic dynamics as well as  the theory of the onset of Hamiltonian chaos~\cite{Lichtenberg_1992}.

Finding the Hamiltonian flow $\bol z(t)=\Phi_t[\bol z_0]$ for the $H$ in Eq.~\eqref{non-integrable Hamiltonian} requires solving the Hamilton's equations,
\begin{equation}\label{Hamiltons equation and flow}
\dv{\bol z}{t}=\Omega\nabla_z H
\end{equation}
with initial condition $\bol{z}_0$. However, these equations are generally nonlinear and their explicit symplectic integration can be computationally hard, especially when the non-integrable part $H_1$ is non-separable in the phase-space variables $\bol z$~\cite{He_2015}. To address this, we focus on explicit near-symplectic integration based on the canonical perturbation theory~\cite{Kominis_2008}. For this task, we seek a nonlinear canonical transformation $T$ mediated by the generating function $W$ that depends on the ordering parameter $\epsilon$ such that,
\begin{equation}\label{integrable perturped hamiltonian}
T:\bol z\to\Bar{\bol z}(\bol z, \epsilon),\quad H(\bol z, t)\to K(\Bar{\bol I}).
\end{equation}
Note that the transformed Hamiltonian $K$ in Eq.~\eqref{integrable perturped hamiltonian} depends only on the new actions $\bar{\bol I}$, and is therefore integrable system  up to order $\epsilon^\kappa$, where $\kappa\geq1$ is the order of the $T$ canonical transformation. As a result, in accordance to Eqs.~\eqref{action-angle evolution},~\eqref{action-angle evolution2},
\begin{align}\label{approximate evolution}
& \Bar{I}_k(t)=\Bar{I}_k(t_0)+\mathcal{O}(\epsilon^\kappa),\\
&\Bar{\theta}_k=\Bar{\theta}_k(t_0)+\Bar{\omega}_k(t-t_0)+\mathcal{O}(\epsilon^\kappa),\,\,\Bar{\omega}_k=\pdv{K}{\Bar{I}_k}.
\end{align}

Consequently, by constructing the nonlinear canonical transformation $T$, the framework developed in Sec.~\ref{sec:3} is directly amenable under the approximation ordering $\mathcal{O}(\epsilon^\kappa)$. In the following, we present the Lie canonical perturbation theory and the toolkit for obtaining $T$  as well as discuss its use as an efficient near-symplectic integrator for long-time dynamics, and outline its inherent limitations. Finally we explicitly calculate the expression of the general quantum observable in Eq.~\eqref{quantum observable} reflecting the collective dynamics for the non-integrable system $H$.

\subsection{Lie canonical perturbation theory}\label{sec:4.1}
Given the Hamiltonian in form~\eqref{non-integrable Hamiltonian} the  Lie theory provides an explicit canonical perturbation scheme with versatile applications~\cite{Kominis_2008, Lichtenberg_1992, Cary_1981, Kominis_2010, Kominis_2012}. In the Lie canonical perturbation theory the canonical transformation $T$ is expressed through the Lie operator $L=\{W, \,\,\}$ as,
\begin{equation}\label{T Lie transformation}
T=e^{-L},
\end{equation}
where $W$ is the Lie generating function and $\{\cdot,\cdot\}$ denotes the Poisson bracket. Expressing the participating quantities $X=\{H, K, T, W, L\}$  as  power series in terms of $\epsilon$
\begin{equation}\label{expansion}
X(\bol z, t, \epsilon)=\sum_{\kappa=0}^\infty X_\kappa(\bol z ,t)\epsilon^\kappa,
\end{equation}
 and substituting into Eq.~\eqref{T Lie transformation} the relation between the $T_\kappa$ and $L_\kappa$ quantities is obtained, given that $T_0=I_{N\times N}$, Although $T$ is a canonical transformation, each $T_\kappa$ is a near symplectic--contact--transformation. Then, to second order in $\epsilon$, the transformations $T$ and $T^{-1}$ are given by,
 \begin{equation}\label{T_1 and T_2}
T_1=-L_1,\quad T_2=\frac{1}{2}(L_1^2-L_2)
 \end{equation}
and
\begin{equation}\label{inverse T_1 and T_2 }
 T_1^{-1}=L_1,\quad T_2^{-1}=\frac{1}{2}(L_1^2+L_2).
\end{equation}

The respective Lie generating functions $W_{1,2}$ satisfy: 
\begin{align}
&\pdv{W_1}{t}+\{W_1, H_0\}=K_1-H_1, \label{W_1}\\
&\pdv{W_2}{t}+\{W_2, H_0\}=2K_2-L_1(K_1+H_1).\label{W_2}
\end{align}
The left hand sides in both Eqs.~\eqref{W_1},~\eqref{W_2} are the total derivative $D_0W_{1,2}$ along the unperturbed orbits of $H_0$. Selecting $K_1=K_2=0$, the new Hamiltonian $K$ to order $\mathcal{O}(\epsilon^2)$ is,
\begin{equation}\label{K to second order}
K(\Bar{\bol I})=H_0(\Bar{\bol I}).
\end{equation}
The generating function $W_1$ is derived from Eq.~\eqref{W_1} as,
\begin{equation}\label{W_1 derivation}
W_1=-\int_{ \substack{ \bol I=\text{constant}\\
\bol\theta (t)=\bol\theta(t_0)+\bol\omega(t-t_0)}}^t H_1(\bol I, \bol\theta, s)d\,s.
\end{equation}
Given the form of $H_1$, the generating function $W_1$ can be obtained from the integral in Eq.~\eqref{W_1 derivation} either analytically or numerically. In either case, the key point here is the existence of a near symplectic transformation $T_1$  that maps the Hamiltonian $H$ in Eq.~\eqref{non-integrable Hamiltonian} into a new Hamiltonian $K$ in Eq.~\eqref{K to second order} which now generates unitary flow to order $\epsilon$.  From Eq.~\eqref{T_1 and T_2}, the new transformed variables read to first order:
\begin{align}
&\Bar{I}_k=I_k-\epsilon\pdv{W_1}{\theta_k},\label{transformed new action}\\
&\Bar{\theta}_k=\theta_k+\epsilon\pdv{W_1}{I_k}. \label{transformed new angle}
\end{align}

Consequently, rather than directly simulating the nonlinear Hamilton's equations Eq.~\eqref{Hamiltons equation and flow} associated with Eq.~\eqref{non-integrable Hamiltonian}, one can systematically construct nonlinear and near symplectic transformations $T_\kappa$, generated by the Lie functions $W_\kappa$, to recast the system into an approximately integrable form to a desired order. This transformation exposes an underlying (approximate) unitary flow in phase-space as discussed in Sec.~\ref{sec:3.1}. In the following section we discuss the validity of this unitary approximation and whether it can be used for long time  quantum symplectic evolution.

\subsection{Validity of the unitary evolution approximation and complexity considerations}\label{sec:4.2}

In Sec.~\ref{sec:3.1}, we demonstrated that the Hamiltonian flow of an integrable system, expressed in the action–angle variables is unitary, leading to the evolution operators $\hat{U}^{\otimes N_s}(\Delta t)$ and $\hat{\mathcal{U}}(\Delta t)$. This correspondence is exact and remains valid for arbitrarily long simulation times $\Delta t = t - t_0$.

For the Lie perturbation method to be applicable, the transformation $T_\kappa$ is required to remain well-defined and bounded. This introduces the well-known problem of small denominators: as the perturbation strength $\epsilon$ increases, resonant surfaces overlap, leading to the onset of chaotic behavior followed by the breakdown of the perturbative expansion and the approximate invariants in Eq.~\eqref{approximate evolution} over long times. On the other hand, under non-resonant conditions and for sufficiently small $\epsilon$, the invariant tori persist, although deformed, according to the Kolmogorov–Arnold–Moser (KAM) theorem~\cite{Arnold_1989}.

To make things explicit, let us assume an one-dimensional, time-periodic perturbation $H_1$ in Eq.~\eqref{non-integrable Hamiltonian}. Expanding in Fourier modes over the angle and time,
\begin{equation}\label{Fourier of H_1}
H_1(\theta, I, t)=\sum_{m\neq0}^\infty A_m(I)e^{i(m\theta(t)-\omega_m t)}
\end{equation}
the first order generating function $W_1$ is calculated along the unperturbed orbits via the integral in Eq.~\eqref{W_1 derivation},
\begin{equation}\label{spesific W_1}
W_1=-\sum_{m\neq0}^\infty A_m(I)e^{i(m\theta_0-\omega_mt_0)}\Big[\frac{e^{i\Omega_m t}-e^{i\Omega_m t_0}}{i\Omega_m}\Big],
\end{equation}
where $\Omega_m=m\omega_0-\omega_m$  and $\omega_0=\partial H_0(I)/\partial I$ is the unperturbed frequency.

The behavior of $W_1$ in Eq.~\eqref{spesific W_1} is governed by the resonance condition $\Omega_m = 0$. In the long-time limit $t \to \infty$,
\begin{equation}\label{resonance blow up}
\lim_{t\to\infty}W_1=2\pi\delta(\Omega_m).
\end{equation}
signaling the breakdown of the perturbative expansion at resonance. However, for any finite time interval $\Delta t=t-t_0$ the limit remains regular,
\begin{equation}\label{reonance no blow}
\lim_{\Omega_m\to0}W_1=-(t-t_0)\sum_{m\neq0}^\infty A_m(I)e^{i(m\theta_0-\omega_mt_0)}.
\end{equation}
demonstrating that $W_1$ is bounded even at resonance~\cite{Kominis_2008, Kominis_2010,Abdullaev_2002}.

To highlight the interplay between the required the smallness of the time step $\Delta t$ with the perturbation strength $\epsilon$, assume that $H_1$ is $p$-times differentiable so its Fourier coefficients decay as $\abs{A_m} = \mathcal{O}(\abs{m}^{-(p+1)})$, yielding the bound
\begin{equation}\label{bound}
\abs{\lim_{\Omega_m\to0}W_1}\leq\Delta t\sum_{m\neq0}^\infty \frac{1}{\abs{m}^{p+1}}=2\Delta t\zeta(p+1),
\end{equation}
where $\zeta$ is the Riemann zeta function. Equation~\eqref{bound} reveals that the validity of the unitary evolution approximation under the Lie perturbation is controlled not only by the perturbation strength $\epsilon$, but by an effective $\epsilon'$,
\begin{equation}\label{new epsilon}
\epsilon'\sim2\epsilon\Delta t\zeta(p+1).
\end{equation}
This condition allows us to maintain $\epsilon'<<1$  even in regimes where $\epsilon$ is not small, such as near the breakdown of KAM tori, by selecting sufficiently small time steps $\Delta t$. However, this comes at a computational cost: reducing $\Delta t$ increases the number of required evolution steps $N_t$ for a total simulation time $\mathcal{T} = N_t \Delta t$. 

In the transformed--$\epsilon$-integrable--representation  $(\Bar{\bol\theta}, \Bar{\bol I})$ the quantum encoding is according to Eq.~\eqref{vector density entangled encoding},
\begin{equation}\label{transformed entangled density}
 \ket{\Bar{\rho}}= \frac{1}{\sqrt{N_s}}\sum_{j=0}^{N_s-1}\sum_{k=0}^{N-1}\sqrt{\Bar{I}_k^j(t)}e^{-i\Bar{\theta}_k^j(t)}\ket{j}\ket{k}.
\end{equation}
Evolving this state over a total time $\mathcal{T}$ requires $N_t$ sequential applications of the unitary blocks of $\hat{U}(\Delta t)$ in Eq.~\eqref{unitary diagonal integrable} (under the substitution $\omega_k\to\Bar{\omega}_k=\pdv{K}{\Bar{I}_k}$), resulting in an overall circuit depth scaling as $\mathcal{O}(N_t N)$, as presented in Fig.~\ref{fig:4}. Given  the relation for the method's approximation error $\epsilon'$ for a time-step $\Delta t$  in Eq,~\eqref{new epsilon}, the total error $\epsilon_{t}$ after $N_t$ steps of unitary and symplectic evolution is therefore $\epsilon_t=N_t\epsilon'\sim\epsilon\mathcal{T}$.
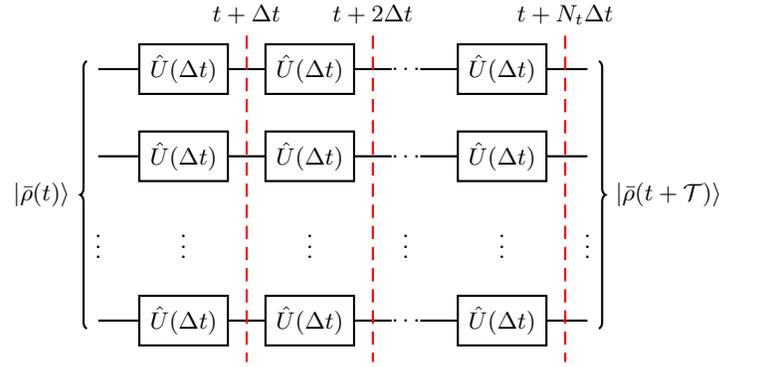
\begin{figure}[ht]
    \centering
   \begin{quantikz}
    \lstick[4]{$\ket{\Bar{\rho}(t)}$} &\gate{\hat{U}(\Delta t)}\slice{$t+\Delta t$} &\gate{\hat{U}(\Delta t)}\slice{$t+2\Delta t$} &\qw \hdots &\gate{\hat{U}(\Delta t)}\slice{$t+N_t \Delta t$}&\qw \rstick[4]{$\ket{\Bar{\rho}(t+\mathcal{T})}$}\\
    &\gate{\hat{U}(\Delta t)} &\gate{\hat{U}(\Delta t)}&\qw \hdots &\gate{\hat{U}(\Delta t)}&\qw\\
    \vdots&\vdots &\vdots&\vdots&\vdots&\vdots \\
     &\gate{\hat{U}(\Delta t)} &\gate{\hat{U}(\Delta t)} &\qw\hdots &\gate{\hat{U}(\Delta t)}&\qw
    \end{quantikz}
    \caption{Unitary and symplectic evolution of the state $\ket{\Bar{\rho}}$ in Eq.~\eqref{transformed entangled density} for a total time $\mathcal{T}=N_t\Delta t$. }
    \label{fig:4}
\end{figure}

In general, for higher integration precision one can leverage the explicit structure of the Lie perturbation method and proceed into calculating the second order Lie generating function from Eq.~\eqref{W_2}. Carrying out higher-order Lie expansions leads to an effective error per time step $\epsilon'$ of the form~\cite{Kominis_2008},
\begin{equation}\label{new new epsilon}
\epsilon'\sim\epsilon\Delta t^\nu, \quad \nu\geq1
\end{equation}
where $\nu$ depends on the order of the Lie expansion. For the presented first order scheme,  $\nu=1$ according to Eq.~\eqref{new epsilon}. As a result,  the complexity of the higher order unitary-symplectic evolution $t\to t+\mathcal{T}$ is,
\begin{equation}\label{scaling of quantum}
\mathcal{O}\Big[N\Big(\frac{\epsilon\mathcal{T}^\nu}{\epsilon_t}\Big)^{1/(\nu-1)}\Big],\quad \epsilon\in[0,1].
\end{equation}
Taking into consideration the $\sqrt{N_s}$ Grover overhead for reading the  $\mel{\Bar{\rho}}{\hat{f}}{\Bar{\rho}}$ observables in the $(\Bar{\bol\theta}, \Bar{\bol I})$ representation, in accordance with Sec.~\ref{sec:3.3}, the overall complexity of the proposed framework is,
\begin{equation}\label{quantum complexity}
\mathcal{O}\Big[\log_2(N N_s),\,\,\sqrt{N_s}N\Big(\frac{\epsilon\mathcal{T}^\nu}{\epsilon_t}\Big)^{1/(\nu-1)}\Big].
\end{equation}
For comparison, the corresponding classical  $\kappa$-order symplectic integration and extraction of the phase space quantities scale as~\cite{Feng_2010},
\begin{equation}\label{classica complexity}
\mathcal{O}\Big[2N N_s,\,\,N_s poly(N)\frac{\mathcal{T}}{\epsilon_t^{1/\kappa
}}\Big)\Big].
\end{equation}

The comparison between the scaling of the quantum workflow in Eq.~\eqref{quantum complexity} and the classical symplectic integration in Eq.~\eqref{classica complexity} highlights two key advantages. The proposed quantum framework achieves an exponential compression in memory resources, reducing the representation of the full phase-space ensemble from $\mathcal{O}(NN_s)$ classical degrees of freedom to $\mathcal{O}(\log_2(NN_s))$ qubits. In addition, it showcases a potential polynomial speed-up in terms of  the problem size  $NN_s$, arising from both the quantum-parallel evolution and the quadratic improvement in sampling complexity. Finally, for higher order Lie perturbation schemes ($\nu>>1$) we approach the linear scaling with respect to the total simulation time $\mathcal{T}$.

The proposed framework transforms nonlinear Hamiltonian dynamics into an inherently finite-dimensional unitary evolution through the classical construction of action–angle variables and the nonlinear contact transformation $T_\kappa$, after which the described quantum techniques apply seamlessly. Regarding the dynamical regime for which this method is faithful, we demonstrated via Eq.~\eqref{reonance no blow} that the  finite time-step Lie perturbation theory is valid at resonances with a well-controlled error $\epsilon'$. However, both the action–angle transformation and the Lie transformation $T_\kappa$ are local constructions, defined on invariant tori. The global error after a total integration time $\mathcal{T}$ is given by the deviation between the exact flow $\Phi_\mathcal{T}^H[\bol z_0]$, generated by the Hamiltonian $H$, and the approximate Lie flow $\Phi_\mathcal{T}^L[\bol z_0]$~\cite{Hairer_2006},
\begin{equation}\label{global error}
\norm{\Phi_\mathcal{T}^H[\bol z_0]-\Phi_\mathcal{T}^L[\bol z_0]}\leq C(\epsilon)\frac{e^{\lambda\mathcal{T}}-1}{\lambda},\quad C(\epsilon)>0.
\end{equation}
In Eq.~\eqref{global error}, $\lambda$ is the finite-time maximal Lyapunov exponent~\cite{Arnold_1989,Lichtenberg_1992,Saletan_1998,Hairer_2006} characterizing the local stretching rate of nearby trajectories of $H$, and $C(\epsilon)$ is a polynomial function of the perturbation strength $\epsilon$. For strongly nonlinear systems with $\epsilon\to1$, resonances overlap giving rise to gaps and chaotic layers in which KAM tori break down. In these chaotic regions with $\lambda>0$, the bound in Eq.~\eqref{global error} scales exponentially with the total evolution time $\mathcal{T}$, indicating that the Lie method ceases to faithfully represent the underlying nonlinear dynamics. In contrast, within surviving KAM tori, where $\lambda\to0$, the bound reduces to linear scaling in $\mathcal{T}$, ensuring reliable Lie-based integration.

Therefore, our quantum-symplectic framework is well-suited for simulating regular nonlinear Hamiltonian dynamics on persistent KAM invariant tori, but is not applicable in chaotic regions of phase-space.\\

\subsection{Evolution of quantum observables}\label{sec:4.3}
In Sec.~\ref{sec:3.3} we demonstrated that quantum observables $\mel{\rho}{\hat{f}}{\rho}$ correspond to phase-space averages $\langle f(\bol I, \bol\theta)\rangle$ in the continuous limit, providing some important quantities for integrable systems. We now examine the general form and the evolution of these quantum observables when transformed from the measurement $(\Bar{\bol\theta},\Bar{\bol I})$ frame back into the original $({\bol\theta},{\bol I})$ coordinates.

Given the general form of a time-independent quantum observable operator $\hat{f}$ in Eqs.~\eqref{quantum observable},~\eqref{part observable}, the mean value observable $\mel{\Bar{\rho}}{\hat{f}}{\Bar{\rho}}$ reads,
\begin{equation}\label{expicit observable transformed}
\Bar{f}(\Bar{\bol I},\Bar{\bol\theta})=\mel{\Bar{\rho}}{\hat{f}}{\Bar{\rho}}=\sum_{j=0}^{N_s-1}\sum_{k,m=0}^{N-1}f_{km}^j\sqrt{\Bar{I}^j_m \Bar{I}_k^j}e^{-i(\Bar{\theta}^j_m-\Bar{\theta}_k^j)}.
\end{equation}
Inserting the transformation relations~\eqref{transformed new action},~\eqref{transformed new angle} into the observable of Eq.~\eqref{expicit observable transformed} yields, to order $\epsilon$,
\begin{widetext}
\begin{equation}\label{tranformed general observable}
f(\bol I,\bol\theta,t)=\sum_{j=0}^{N_s-1}\sum_{k,m=0}^{N-1}f_{km}^j\sqrt{I^j_k I^j_m} e^{-i(\theta_m^j-\theta_k^j)}\Big\{1-\epsilon\Big[\frac{1}{2 I_k^j I_m^j}\Big(I_k^j\pdv{W_1}{\theta^j_m}+I_m^j\pdv{W_1}{\theta^j_k}\Big)+i\Big(\pdv{W_1}{I_m^j}-\pdv{W_1}{I_k^j}\Big)\Big]\Big\}.
\end{equation}
\end{widetext}
The quantum observable in Eq.~\eqref{tranformed general observable} describes the first order behavior of a general phase-space quantity $f(\bol I,\bol\theta, t)$ due to the contribution of the non-integrable part $H_1$. Notably, Eq.~\eqref{tranformed general observable} contains nonlinear contributions from the first order Lie generating function $W_1$ in Eq.~\eqref{W_1 derivation}, reflecting the nonlinear deformation of the underlying integrable dynamics.

We proceed by explicitly calculating the classical evolution equation of the quantum observable~\eqref{tranformed general observable} to the same approximation. Starting from Eq.~\eqref{expicit observable transformed} the total derivative is,
\begin{equation}\label{untransformed derivative}
\dv{\Bar{f}}{t}=-i\sum_{j}\sum_{k,m}(\Bar{\omega}_m^j-\Bar{\omega}_k^j)\Bar{F}_{km}^j,
\end{equation}
where the $\Bar{F}_{km}^j$ is the summand in Eq.~\eqref{expicit observable transformed}.
Similarly, the summands in Eq.~\eqref{tranformed general observable} to the respective order in $\epsilon$ are denoted as $F_{0km}^j$ and $F_{1km}^j$ such that  $\Bar{F}_{km}^j=F_{0km}^j-\epsilon F_{1km}^j$. As a result, to evaluate expression~\eqref{untransformed derivative} into the initial coordinates we need to find the transformation $\Bar{\omega}_k\to\omega_k$. In that respect,
\begin{equation}\label{chain rule}
\Bar{\omega}_k=\pdv{K}{\Bar{I}_k}=\pdv{K}{I_k}\pdv{I_k}{\bar{I}_k}
\end{equation}
and from Eq.~\eqref{K to second order} and Eq.~\eqref{transformed new action},
\begin{equation}\label{K in terms of initial variables}
K=H_0(I_k)-\epsilon\pdv{W_1}{\theta_k}\pdv{H_0}{I_k}.
\end{equation}
Reminding that $\omega_{0k}=\pdv{H_0}{I_k}$ and using Eq.~\eqref{K in terms of initial variables} and Eq.~\eqref{transformed new action}, the $\Bar{\omega}_k$ in Eq.~\eqref{chain rule} obtains, in terms of the $(\bol I, \bol\theta)$, the form
\begin{equation}\label{omega transformation}
\Bar{\omega}_k=\omega_{0k}-\epsilon\pdv{W_1}{\theta_k}\pdv{\omega_{0k}}{I_k}.
\end{equation}
Inserting Eq.~\eqref{omega transformation} and the summand transformation  $\Bar{F}_{km}^j=F_{0km}^j-\epsilon F_{1km}^j$ into Eq.~\eqref{untransformed derivative}, the evolution equation for the quantum observable is finally obtained (for time-independent $W_1$),
\begin{widetext}
\begin{equation}\label{evolution observable equation}
\dv{f}{t}=-i\sum_j\sum_{km}(\omega_{0m}-\omega_{0k})F_{0km}^j+i\epsilon\sum_j\sum_{km}(\omega_{0m}-\omega_{0k})F^j_{1km}+i\epsilon\sum_j\sum_{km}
\Big(\pdv{W_1}{\theta_m^j}\pdv{\omega_{0k}}{I^j_k}-\pdv{W_1}{\theta^j_k}\pdv{\omega_{0k}}{I_k^j}\Big)F^j_{1km}.
\end{equation}
\end{widetext}

The contributions from the evolution equation~\eqref{evolution observable equation} are analyzed as follows:.
\begin{itemize}
    \item The term $(\omega_{0m}-\omega_{0k})$ reflects the phase mixing between the unperturbed frequencies and contributes both to the zero order unperturbed part $F_{0km}^j$ and the first order $F_{1km}^j$.
    \item The first order term $F_{1km}^j$ contains derivatives $\partial_I W_1$ and $\partial_\theta W_1$ of the nonlinear generating function $W_1$, signifying nonlinear deformations of the invariant tori.
    \item Finally, the term in the parenthesis in Eq.~\eqref{evolution observable equation} contains the frequency gradients $\partial_I\omega$, representing frequency shifts and mixing of the KAM tori~\cite{Arnold_1989}.
\end{itemize}
In summary, Eq.~\eqref{evolution observable equation} describes the  transport of phase-space observables along the KAM invariant tori~\cite{Kominis_2010,Kominis_2012}, induced by combining classical nonlinear near-symplectic transformations $T$, parallel unitary quantum evolution of entangled trajectories, and quantum computing calculations for large ensemble of the discrete phase-space trajectories.

\section{Conclusions}\label{Conclusions}
Simulating nonlinear dynamics with quantum computers is a challenging task due to the inherent linearity of quantum evolution. In this work, we bridge this gap by establishing rigorous connections between quantum computing techniques and the classical theory of  finite-dimensional Hamiltonian systems. In particular, we develop a unified quantum-symplectic framework  exploiting the intrinsic geometric connection of unitary quantum dynamics with specific symplectic phase-space dynamics.

Our starting point is a purely geometric description of quantum mechanics into a real Kähler space, where the symplectic structure is explicitly manifested. Within this representation exact connections between unitary quantum Schrodinger evolution and classical quadratic Hamiltonian flows are identified. This geometric picture allows us to identify which quadratic Hamiltonian systems can be recast into the quantum Schrodinger form of Eq.~\eqref{quantal Schrodinger}, and thus have the potential to be quantum simulated exponentially fast, as demonstrated in~\cite{Babbush_2023}.

Extending beyond the quadratic forms, we showcase that Liouville integrability provides an explicit path to unitary evolution under the action-angle transformations. By encoding the action-angle variables into a KvN quantum state we obtain  an exact finite-dimensional quantum representation due to the integrability constraint. This sets up the basis for parallelized quantum evolution of trajectories ensemble in the phase space using the Liouville theorem. In addition, we show how quantum observables can be constructed from the ensemble of phase-space trajectories and that quantum sampling converges to classical averages in the continuous limit.

As a next step. we tackle non-integrable and  nonlinear Hamiltonian systems by employing the Lie perturbation theory as a systematic tool for constructing near-integrable representations. This approach enables approximate unitary evolution over finite time scales with controllable error governed by the perturbation strength and time discretization step. Notably, the quantum processes including  unitary evolution and observable measurement propose  exponential resource savings and polynomial speed-up with respect to the total number of phase-space ensemble variables, when compared to the standard symplectic integration techniques.

Finally, we derive the time evolution equation for the quantum observables in  phase-space. The nonlinear evolution equation~\eqref{evolution observable equation} captures essential phenomena of nonlinear Hamiltonian flows such as deformation, transport, and frequency mixing in the phase space, illustrating how classical nonlinear transformations can unlock practical applications for quantum computing.

Although the proposed unitary-symplectic framework is applicable to regular nonlinear Hamiltonian dynamics, the broader objective of this work is to establish a coherent framework connecting classical Hamiltonian mechanics, symplectic geometry, and quantum computation. Unlike trajectory-level quantum simulation techniques, our approach properly blends quantum algorithmic modules and ideas with the classical theory of Hamiltonian systems providing a foundation for future structure-preserving quantum simulation in high-dimensional classical systems.

\section*{Acknowledgments}
This work has been carried out within the framework of the EUROfusion Consortium, funded by the European Union via the Euratom Research and Training Programme (Grant Agreement No 101052200 — EUROfusion). Views and opinions expressed are however those of the authors only and do not necessarily reflect those of the European Union or the European Commission. Neither the European Union nor the European Commission can be held responsible for them. A.K.R. and G.V. are supported by the US 
Grant Nos. DE-SC0021647, DE-FG02-91ER-54109 DE-SC0021651.
E.K. acknowledges Yannis Kominis for introducing him to Lie perturbation theory

\textit{This work is dedicated to the memory of Prof. Nuno Loureiro}.

\bibliography{ref}

@PREAMBLE{
 "\providecommand{\noopsort}[1]{}" 
 # "\providecommand{\singleletter}[1]{#1}%" 
}

@article{Childs_2012,
volume={12},
author={Andrew M. Childs and Nathan Wiebe},
title={ Hamiltonian simulation using linear combinations of unitary operations},
   doi={10.26421/QIC12.11-12-1},
   journal={Quantum Inf. Comput.},
   year={2012},
}

@article{Schlimgen_2021,
  title = {Quantum Simulation of Open Quantum Systems Using a Unitary Decomposition of Operators},
  author = {Schlimgen, Anthony W. and Head-Marsden, Kade and Sager, LeeAnn M. and Narang, Prineha and Mazziotti, David A.},
  journal = {Phys. Rev. Lett.},
  volume = {127},
  issue = {27},
  pages = {270503},
  numpages = {6},
  year = {2021},
  doi = {10.1103/PhysRevLett.127.270503}
}

@article{Schlimgen_2022,
  title = {Quantum state preparation and nonunitary evolution with diagonal operators},
  author = {Schlimgen, Anthony W. and Head-Marsden, Kade and Sager-Smith, LeeAnn M. and Narang, Prineha and Mazziotti, David A.},
  journal = {Phys. Rev. A},
  volume = {106},
  issue = {2},
  pages = {022414},
  year = {2022},
  doi = {10.1103/PhysRevA.106.022414}
}

@article{Jin_2023,
  title = {Quantum simulation of partial differential equations: Applications and detailed analysis},
  author = {Jin, Shi and Liu, Nana and Yu, Yue},
  journal = {Phys. Rev. A},
  volume = {108},
  issue = {3},
  pages = {032603},
  numpages = {20},
  year = {2023},
  doi = {10.1103/PhysRevA.108.032603},
}

@article{Koukoutsis_2025,
author = {Koukoutsis, Efstratios and Papagiannis, Panagiotis and Hizanidis, Kyriakos and Ram, Abhay K. and Vahala, George and Amaro, {\'O}scar and Gamiz, Lucas I Iñigo and Vallis, Dimosthenis},
doi = {10.2478/qic-2025-0007},
title = {Quantum Implementation of Non-unitary Operations with Biorthogonal Representations},
journal = {Quantum Inf. Comput.},
number = {2},
volume = {25},
year = {2025},
pages = {141--155}
}

@article{Amaro_2025, 
title={Variational quantum simulation of the Fokker–Planck equation applied to quantum radiation reaction},
volume={91}, 
DOI={10.1017/S0022377825100652},
number={4}, 
journal={J. Plasma Phys.},
author={Amaro, {\'O}scar and Iñigo Gamiz, Lucas Ivan and Vranic, Marija},
year={2025}, 
pages={122}
}

@article{Koukoutsis_2023,
  title = {Dyson maps and unitary evolution for Maxwell equations in tensor dielectric media},
  author = {Koukoutsis, Efstratios and Hizanidis, Kyriakos and Ram, Abhay K. and Vahala, George},
  journal = {Phys. Rev. A},
  volume = {107},
  issue = {4},
  pages = {042215},
  numpages = {10},
  year = {2023},
  doi = {10.1103/PhysRevA.107.042215},
}

@article{Bosch_2025,
  title = {Quantum wave simulation with sources and loss functions},
  author = {B{\"o}sch, Cyrill and Schade, Malte and Aloisi, Giacomo and Keating, Scott D. and Fichtner, Andreas},
  journal = {Phys. Rev. Res.},
  volume = {7},
  issue = {3},
  pages = {033225},
  numpages = {18},
  year = {2025},
  month = {Sep},
  doi = {10.1103/t8wt-mp2l},
}

@misc{Gamiz_2026,
      title={Quantum Monte Carlo Simulations for predicting electron-positron pair production via the linear Breit-Wheeler process}, 
      author={Lucas I. Iñigo Gamiz and Óscar Amaro and Efstratios Koukoutsis and Marija Vranić},
      year={2026},
      eprint={2601.03953},
      archivePrefix={arXiv},
      primaryClass={physics.plasm-ph},
      url={https://arxiv.org/abs/2601.03953}, 
}

@article{Babbush_2023,
  title = {Exponential Quantum Speedup in Simulating Coupled Classical Oscillators},
  author = {Babbush, Ryan and Berry, Dominic W. and Kothari, Robin and Somma, Rolando D. and Wiebe, Nathan},
  journal = {Phys. Rev. X},
  volume = {13},
  issue = {4},
  pages = {041041},
  numpages = {34},
  year = {2023},
  doi = {10.1103/PhysRevX.13.041041},
}

@article{Bhuvanesh_2026,
  title = {Simulating plasma wave propagation on a superconducting quantum chip},
  author = {Sundar, Bhuvanesh and Evert, Bram and Geyko, Vasily and Patterson, Andrew and Joseph, Ilon and Shi, Yuan},
  journal = {Phys. Rev. Appl.},
  volume = {25},
  issue = {2},
  pages = {024077},
  numpages = {24},
  year = {2026},
  month = {Feb},
  doi = {10.1103/lxr6-t7vb},
}

@article{An_2023,
  title = {Linear Combination of Hamiltonian Simulation for Nonunitary Dynamics with Optimal State Preparation Cost},
  author = {An, Dong and Liu, Jin-Peng and Lin, Lin},
  journal = {Phys. Rev. Lett.},
  volume = {131},
  issue = {15},
  pages = {150603},
  numpages = {6},
  year = {2023},
  doi = {10.1103/PhysRevLett.131.150603},
}

@article{Wawrzyniak_2025,
title = {A quantum algorithm for the lattice-Boltzmann method advection-diffusion equation},
journal = {Comput. Phys. Commun.},
volume = {306},
pages = {109373},
year = {2025},
issn = {0010-4655},
doi = {https://doi.org/10.1016/j.cpc.2024.109373},
author = {David Wawrzyniak and Josef Winter and Steffen Schmidt and Thomas Indinger and Christian F. Jan{\ss}en and Uwe Schramm and Nikolaus A. Adams}
}

@article{Novikau_2023,
  title = {Simulation of Linear Non-Hermitian Boundary-Value Problems with Quantum Singular-Value Transformation},
  author = {Novikau, I. and Dodin, I.Y. and Startsev, E.A.},
  journal = {Phys. Rev. Appl.},
  volume = {19},
  issue = {5},
  pages = {054012},
  numpages = {23},
  year = {2023},
  doi = {10.1103/PhysRevApplied.19.054012},
}

@article{Harrow_2009,
  title = {Quantum Algorithm for Linear Systems of Equations},
  author = {Harrow, Aram W. and Hassidim, Avinatan and Lloyd, Seth},
  journal = {Phys. Rev. Lett.},
  volume = {103},
  issue = {15},
  pages = {150502},
  numpages = {4},
  year = {2009},
  month = {Oct},
  publisher = {American Physical Society},
  doi = {10.1103/PhysRevLett.103.150502},
  url = {https://link.aps.org/doi/10.1103/PhysRevLett.103.150502}
}

@article{Berry_2014,
doi = {10.1088/1751-8113/47/10/105301},
year = {2014},
volume = {47},
number = {10},
pages = {105301},
author = {Berry, Dominic W},
title = {High-order quantum algorithm for solving linear differential equations},
journal = {J. Phys. A: Math. Theor.},
}

@Inbook{Ashtekar_1999,
author="Ashtekar, Abhay
and Schilling, Troy A.",
editor="Harvey, Alex",
title="Geometrical Formulation of Quantum Mechanics",
bookTitle="On Einstein's Path: Essays in Honor of Engelbert Schucking",
year="1999",
publisher="Springer New York",
address="New York, NY",
pages="23--65",
isbn="978-1-4612-1422-9",
doi="10.1007/978-1-4612-1422-9_3",
url="https://doi.org/10.1007/978-1-4612-1422-9_3"
}

@article{Heslot_1985,
  title = {Quantum mechanics as a classical theory},
  author = {Heslot, Andr\'e},
  journal = {Phys. Rev. D},
  volume = {31},
  issue = {6},
  pages = {1341--1348},
  numpages = {0},
  year = {1985},
  doi = {10.1103/PhysRevD.31.1341},
}

@misc{Volovich_2025,
      title={Real Quantum Mechanics in a Kahler Space}, 
      author={Igor Volovich},
      year={2025},
      eprint={2504.16838},
      archivePrefix={arXiv},
      primaryClass={quant-ph},
      url={https://arxiv.org/abs/2504.16838}, 
}

@article{Brody_2001,
title = {Geometric quantum mechanics},
journal = {J. Geom. Phys.},
volume = {38},
number = {1},
pages = {19-53},
year = {2001},
issn = {0393-0440},
doi = {https://doi.org/10.1016/S0393-0440(00)00052-8},
author = {Dorje C. Brody and Lane P. Hughston},
}

@article{Kominis_2008,
doi = {10.1088/1751-8113/41/11/115202},
year = {2008},
month = {mar},
volume = {41},
number = {11},
pages = {115202},
author = {Kominis, Y and Hizanidis, K and Constantinescu, D and Dumbrajs, O},
title = {Explicit near-symplectic mappings of Hamiltonian systems with Lie-generating functions},
journal = {J. Phys. A: Math. Theor.}
}

@article{Strocchi,
  title = {Complex Coordinates and Quantum Mechanics},
  author = {Strocchi, F.},
  journal = {Rev. Mod. Phys.},
  volume = {38},
  issue = {1},
  pages = {36--40},
  numpages = {0},
  year = {1966},
  doi = {10.1103/RevModPhys.38.36}
}

@article{Briggs_2013,
  title = {Quantum dynamics simulation with classical oscillators},
  author = {Briggs, John S. and Eisfeld, Alexander},
  journal = {Phys. Rev. A},
  volume = {88},
  issue = {6},
  pages = {062104},
  numpages = {12},
  year = {2013},
  doi = {10.1103/PhysRevA.88.062104},
}

@article{Briggs_2012,
  title = {Coherent quantum states from classical oscillator amplitudes},
  author = {Briggs, John S. and Eisfeld, Alexander},
  journal = {Phys. Rev. A},
  volume = {85},
  issue = {5},
  pages = {052111},
  numpages = {10},
  year = {2012},
  doi = {10.1103/PhysRevA.85.052111},
}

@article{Mendes_2003,
  title = {Quantum control and the Strocchi map},
  author = {Vilela Mendes, R. and Man'ko, V. I.},
  journal = {Phys. Rev. A},
  volume = {67},
  issue = {5},
  pages = {053404},
  numpages = {8},
  year = {2003},
  doi = {10.1103/PhysRevA.67.053404}
}

@book{Feng_2010,
    author ={Kang Feng and Mengzhao Qin} ,
    title = {Symplectic Geometric Algorithms for Hamiltonian Systems},
    publisher ={Springer} ,
    year = {2010},
    doi={10.1007/978-3-642-01777-3}
}

@article{Volovich_2019,
    author ={Volovich, Igor V.} ,
    title = {Complete Integrability of Quantum and Classical Dynamical Systems},
    journal ={p-Adic Numbers Ultrametric Anal. Appl.} ,
    year = {2019},
    doi={10.1134/S2070046619040071},
    volume={11},
    issue={4},
    pages={328--334}
}

@article{Joseph_2020,
  title = {Koopman--von Neumann approach to quantum simulation of nonlinear classical dynamics},
  author = {Joseph, Ilon},
  journal = {Phys. Rev. Res.},
  volume = {2},
  issue = {4},
  pages = {043102},
  numpages = {17},
  year = {2020},
  doi = {10.1103/PhysRevResearch.2.043102}
}

@article{Bullock_2004,
author = {Bullock, Stephen S. and Markov, Igor L.},
title = {Asymptotically optimal circuits for arbitrary n-qubit diagonal comutations},
year = {2004},
volume = {4},
number = {1},
journal = {Quantum Inf. Comput.},
pages = {27–47},
numpages = {21},
doi={10.5555/2011572.2011575}
}

@article{Brizzard_2013,
title = {Jacobi zeta function and action-angle coordinates for the pendulum},
journal = {Commun. Nonlinear Sci. Numer. Simul.},
volume = {18},
number = {3},
pages = {511-518},
year = {2013},
doi = {https://doi.org/10.1016/j.cnsns.2012.08.023},
author = {Alain J. Brizard},
}

@article{Kaufman_1972,
    author = {Kaufman, Allan N.},
    title = {Quasilinear Diffusion of an Axisymmetric Toroidal Plasma},
    journal = {Phys. Fluids},
    volume = {15},
    number = {6},
    pages = {1063-1069},
    year = {1972},
    doi = {10.1063/1.1694031}
}

@book{Lichtenberg_1992,
    author = {A. J. Lichtenberg and M. A. Lieberman},
    title ={Regular and Chaotic Dynamics},
    publisher ={Springer},
    year = {1992},
    doi={10.1007/978-1-4757-2184-3}
}

@article{Kominis_2012,
  title = {Interaction of charged particles with localized electrostatic waves in a magnetized plasma},
  author = {Kominis, Y. and Ram, A. K. and Hizanidis, K.},
  journal = {Phys. Rev. E},
  volume = {85},
  issue = {1},
  pages = {016404},
  numpages = {8},
  year = {2012},
  doi = {10.1103/PhysRevE.85.016404}
}

@article{Kominis_2010,
  title = {Kinetic Theory for Distribution Functions of Wave-Particle Interactions in Plasmas},
  author = {Kominis, Y. and Ram, A. K. and Hizanidis, K.},
  journal = {Phys. Rev. Lett.},
  volume = {104},
  issue = {23},
  pages = {235001},
  numpages = {4},
  year = {2010},
  doi = {10.1103/PhysRevLett.104.235001}
}

@article{Cary_1981,
title = {Lie transform perturbation theory for Hamiltonian systems},
journal = {Phys. Reports},
volume = {79},
number = {2},
pages = {129-159},
year = {1981},
issn = {0370-1573},
doi = {https://doi.org/10.1016/0370-1573(81)90175-7},
author = {John R. Cary}
}

@misc{Wu_2025,
      title={Structure-preserving quantum algorithms for linear and nonlinear Hamiltonian systems}, 
      author={Hsuan-Cheng Wu and Xiantao Li},
      year={2025},
      eprint={2411.03599},
      archivePrefix={arXiv},
      primaryClass={quant-ph},
      url={https://arxiv.org/abs/2411.03599}, 
}

@book{Arnold_1989,
    author ={V. I. Arnold} ,
    title ={Mathematical Methods of Classical Mechanics} ,
    publisher ={Springer} ,
    year = {1989},
    doi={10.1007/978-1-4757-2063-1}
}

@article{Klimontovich_1957,
    author ={Y. Klimontovich} ,
    title ={ON THE METHOD OF "SECOND QUANTIZATION" IN PHASE SPACE} ,
    journal ={J. Exptl. Theoret. Phys. } ,
    year = {1957},
    volume={33},
    pages={982-990}
}

@article{Klimontovich_1972,
    author ={Y. Klimontovich} ,
    title ={Kinetic equations for classical nonideal plas-
mas} ,
    journal ={J. Exptl. Theoret. Phys. } ,
    year = {1972},
    volume={62},
    pages={1770-1781}
}

@article{Dougherty_1969,
title={The Statistical Theory of Non-Equilibrium Processes in a Plasma. By Yu. L. Klimontovich (translated by H. S. H. Massey and O. M. Blunn), Pergamon Press, 1967. 284 pp. 70s.}, 
volume={3}, DOI={10.1017/S0022377800004244}, 
number={1},
journal={J. Plasma Phys.}, 
author={Dougherty, J. P.}, 
year={1969},
pages={148–148}
}

@misc{Dsouza_2006,
      title={Quantum Simulation of Coupled Harmonic Oscillators: From Theory to Implementation}, 
      author={Viraj Dsouza and Weronika Golletz and Dimitrios Kranas and Bakhao Dioum and Vardaan Sahgal and Eden Schirman},
      year={2026},
      eprint={2603.05479},
      archivePrefix={arXiv},
      primaryClass={quant-ph},
      url={https://arxiv.org/abs/2603.05479}, 
}

@incollection{Brassard_2002,
   title={Quantum amplitude amplification and estimation},
   ISBN={9780821878958},
   ISSN={0271-4132},
   url={http://dx.doi.org/10.1090/conm/305/05215},
   DOI={10.1090/conm/305/05215},
   booktitle={Quantum Computation and Information},
   publisher={American Mathematical Society},
   author={Brassard, Gilles and Høyer, Peter and Mosca, Michele and Tapp, Alain},
   year={2002},
   editor={Samuel J. Lomonaco, Jr.},
   pages={53–74} 
   }

@article{Grikno_2021,
    author ={Dmitry Grinko and Julien Gacon and Christa Zoufal and Stefan Woerner} ,
    title ={Iterative quantum amplitude estimation} ,
    journal ={npj Quantum Inf.} ,
    year ={2021},
    volume={07},
    issue={52},
    doi={10.1038/s41534-021-00379-1}
}

@article{He_2015,
title = {Volume-preserving algorithms for charged particle dynamics},
journal = {J. Comput. Phys.},
volume = {281},
pages = {135-147},
year = {2015},
issn = {0021-9991},
doi = {10.1016/j.jcp.2014.10.032},
author = {Yang He and Yajuan Sun and Jian Liu and Hong Qin},
}

@article{Abdullaev_2002,
doi = {10.1088/0305-4470/35/12/307},
url = {https://doi.org/10.1088/0305-4470/35/12/307},
year = {2002},
volume = {35},
number = {12},
pages = {2811},
author = {S. S. Abdullaev},
title = {The 
Hamilton-Jacobi method and Hamiltonian maps},
journal = {J. Phys. A: Math. Theor.}
}

@book{Saletan_1998, 
place={Cambridge}, 
title={Classical Dynamics: A Contemporary Approach}, 
publisher={Cambridge University Press}, 
author={José, Jorge V. and Saletan, Eugene J.}, 
year={1998},
doi={10.1017/CBO9780511803772}
}

@misc{Manjarres,
      title={Hamiltonian point of view of quantum perturbation theory}, 
      author={A. D. Bermúdez Manjarres},
      year={2025},
      eprint={2107.07050},
      archivePrefix={arXiv},
      primaryClass={quant-ph},
      doi={10.48550/arXiv.2107.07050}, 
}

@book{Hairer_2006, 
place={Berlin, Heidelberg}, 
title={Geometric Numerical Integration: Structure-Preserving Algorithms for Ordinary Differential Equations}, 
publisher={Springer}, 
author={Ernst Hairer and Gerhard Wanner and Christian Lubich}, 
year={2006},
doi={10.1007/3-540-30666-8}
}

\end{document}